\newcommand{\Rmnum}[1]{\expandafter\@slowromancap\romannumeral #1@}
\newtheorem{theorem}{Theorem}
\newtheorem{lemma}{Lemma}
\newenvironment{proof}[1][Proof]{\begin{trivlist}
		\item[\hskip \labelsep {\itshape #1}]}{\end{trivlist}}
\newcommand{\qed}{\nobreak \ifvmode \relax \else
	\ifdim\lastskip<1.5em \hskip-\lastskip
	\hskip1.5em plus0em minus0.5em \fi \nobreak
	\vrule height0.75em width0.5em depth0.25em\fi}
\begin{document}

\title{Secure Millimeter Wave Cloud Radio Access Networks Relying on Microwave Multicast Fronthaul}
\author{Wanming Hao,~\IEEEmembership{Member,~IEEE,} Gangcan Sun, Jiankang Zhang,~\IEEEmembership{Senior Member,~IEEE}, Pei Xiao,~\IEEEmembership{Senior Member,~IEEE,} and Lajos Hanzo,~\IEEEmembership{Fellow,~IEEE}
	\thanks{W. Hao is with the School of Information Engineering, Zhengzhou University, Zhengzhou 450001, China, and also with the Henan Institute of Advanced Technology, Zhengzhou University, Zhengzhou 450001, China,  and also with  the 5G Innovation Center, Institute of Communication Systems, University of Surrey, Guildford GU2 7XH, U.K.  (Email: iewmhao@zzu.edu.cn).}
	\thanks{G. Sun is with the School of Information Engineering, Zhengzhou University, Zhengzhou 450001, China  (E-mail: iegcsun@zzu.edu.cn).}
	\thanks{P. Xiao is with the Institute for Communication Systems (ICS), Home for 5G Innovation Centre, University of Surrey, Guildford, Surrey, GU2 7XH, U.K. (Email: p.xiao@surrey.ac.uk).}
	\thanks{J. Zhang and L. Hanzo are with the School of Electronics and Computer
		Science, University of Southampton, Southampton SO17 1BJ, U.K. (e-mail:
		jz09v@ecs.soton.ac.uk; lh@ecs.soton.ac.uk).}
}

%



\maketitle
\begin{abstract}
In this paper, we investigate the downlink secure beamforming (BF) design problem of cloud radio access networks (C-RANs) relying on multicast fronthaul, where millimeter-wave and microwave carriers are used for the access links and fronthaul links, respectively. The base stations (BSs) jointly serve users through cooperating hybrid analog/digital BF. We first develop an analog BF for cooperating BSs. On this basis, we formulate a secrecy rate maximization (SRM) problem subject both to a realistic limited fronthaul capacity and to the total BS transmit power constraint.  Due to the intractability of the non-convex problem formulated, advanced convex approximated techniques, constrained concave convex procedures and semi-definite programming (SDP) relaxation are applied to transform it into a convex one. Subsequently,  an iterative algorithm of jointly optimizing multicast BF, cooperative digital BF and the artificial noise (AN) covariance is proposed. Next, we construct the solution of the original problem by exploiting both the primal and the dual optimal solution of the SDP-relaxed problem. Furthermore, a per-BS transmit power constraint is considered, necessitating the reformulation of the SRM problem, which can be solved by an  efficient iterative algorithm. We then eliminate the idealized simplifying assumption of having perfect channel state information (CSI) for the eavesdropper links and invoke realistic imperfect CSI. Furthermore, a worst-case SRM problem is investigated. Finally, by  combining the so-called $\mathcal{S}$-Procedure and convex approximated techniques, we design an efficient  iterative algorithm to solve it. Simulation results are presented to evaluate the secrecy rate and demonstrate the effectiveness of the proposed algorithms. 
\end{abstract}

\begin{IEEEkeywords}
Multicast, C-RAN, millimeter-wave communication, physical layer security.
\end{IEEEkeywords}

%
\IEEEpeerreviewmaketitle

\section{Introduction}
To satisfy the ever-increasing demand for higher data rates, cloud radio access networks (C-RANs) have been proposed as promising solutions, where the base stations (BSs) are connected to a central processor~(CP) such as the mobile switching center  through limited-capacity fronthaul links~\cite{1Peng},~\cite{2hao}. In C-RANs, the CP performs sophisticated baseband signal processing and resource optimization by exploiting global channel state information (CSI) for significantly reducing the interference among users, while improving the spectral efficiency (SE) of the system~\cite{3Yu}. Additionally, with the deployment of ultra-dense BSs, it becomes beneficial  for the adjacent BSs to form a BS cluster and to cooperatively/jointly serve users relying on the sophisticated coordinated multiple-point (CoMP) transmission approach~\cite{4CoMP}.

The CoMP-based C-RAN structure of Fig.~\ref{systemfigure}  generally includes a CP, cooperating BSs and users. The BSs receive data from the CP via the fronthual links, while the users are jointly served by the BSs via cooperative beamforming (BF). Thus, a pair of fundamental problems have to be considered, namely: the fronthaul/access links and the data sharing approach. For the former, in general, there are multiple carrier technologies that can be adopted, such as tethered links relying on fiber or copper, and wireless links using microwave or millimeter-wave (mmWave) carriers~\cite{4Li}-\cite{6Zhang}.  However, for the ultra-dense BS deployment, it would be unrealistic to have a wired connection between each BS and the CP due to its high cost. Furthermore, the capacity provided by wired links is fixed, hence it cannot be dynamically adjusted to cater for  the traffic variation according to the users' demands. Therefore, the more flexible wireless carriers constitute appropriate candidates for fronthaul links~\cite{5Wireless}. Here, we adopt the microwave fronthaul-link carriers and mmWave access-link carriers, and the reason as follows: The distance between the CP and the BSs is relatively high, hence it is inefficient to use mmWave frequencies due to their high propagation loss. Therefore we adopted a microwave fronthaul as a benefit of its lower path-loss. Additionally, although the propagation loss of mmWave signals is high, the distance between the BSs and users
is relatively short, hence it is appropriate to use mmWave access links. Furthermore, these
choices avoid the interference between the fronthaul and access links.  As for the data sharing approach among the CP and BS cluster, when multiple BSs jointly serve a user, the CP needs to transmit this user's message to all the cooperating BSs. Fortunately, this point-to-multipoint CP-BSs fronthaul transmission can be readily realized by multicast BF, which has been applied in~\cite{6Hu},~\cite{7Yu}. Therefore, in this paper, we will formulate a more practical system relying on  BS-cooperation-aided  mmWave C-RAN combined with a microwave multicast fronthaul.

Additionally, due to the broadcast nature of wireless communication, the users' confidential messages may be eavesdropped, jeopardizing their secure transmissions~\cite{8Wang}-\cite{10Yang}. As a compelling solution, physical layer security (PLS)  has been proposed for enhancing the  communication security~\cite{11Chu}.  Although numerous solutions have been conceived for  PLS~\cite{8Wang}-\cite{11Chu}, the following challenges exist in  our proposed system structure. {\textit{i)}} Hybrid analog/digital BF design: The BSs are usually equipped with multiple mmWave antennas to compensate for the serious propagation loss. However, in order to reduce the radio frequency (RF) energy consumption and hardware cost, only a few RF chains are affordable~\cite{12Dai}. For this reason, we consider  hybrid analog/digital BF designs for the cooperating BSs.  {\textit{ii)}} Artificial noise (AN) design: To further improve the transmission security, the AN used for jamming should be properly designed for the cooperating BSs.  {\textit{iii)}} Multicast BF design: The CP-BSs multicast BF has to be optimally designed for improving the fronthaul capacity. 

Against this background, in this paper, we investigate the downlink (DL) transmission security problem of a mmWave C-RAN with multicast fronthaul. Our main contributions include: 
\begin{itemize}
\item  We propose a BS-cooperation-aided C-RAN relying on a microwave multicast fronthaul. The adjacent BSs form a cooperating cluster and jointly serve the users relying on mmWave carriers, while the BSs receive their fronthaul data from the CP with microwave multicast. To reduce both the hardware cost and energy consumption, each BS is equipped with  a single RF chain connected to  multiple antennas. This scenario is further complicated by the fact that the eavesdroppers (Eves) maliciously attack the legitimate users in order to wiretap their confidential messages. 
\item  We design a DL analog transmit BF scheme for the cooperating BSs. Based on an equivalent channel model,  we formulate a DL secrecy rate maximization (SRM) problem  by jointly optimizing the microwave CP-BSs multicast BF, the cooperative mmWave BSs-users digital BF and the AN covariance under the constraints of fronthaul capacity and  total BS transmit power. However, the  problem formulated is intractable because of  its non-convexity. Hence,  advanced convex approximated techniques, constrained concave convex procedures (CCCP) and semi-definite programming (SDP) relaxation are applied to recast the original problem into a convex one. Then, we propose an efficient iterative algorithm for solving the original problem.  Meanwhile, we construct the solution of the original problem by exploiting both the primal and the dual optimal solution of the SDP-relaxed problem. Next, we replace the total BS transmit power constraint by a  per-BS transmit power constraint and reformulate the SRM problem. An iterative solution algorithm is also proposed.

\item To consider a practical scenario, we then dispense with the idealized simplifying assumption of having perfect CSIs for the Eves links and replace it by realistic imperfect CSIs ones followed by investigating a worst-case SRM problem. The problem formulated is then solved by our proposed iterative algorithm relying on  the classic $\mathcal{S}$-Procedure and the convex approximated techniques. Our simulation results demonstrate the efficiency of the proposed algorithms.
\end{itemize}

The rest of this paper is organized as follows. The related contributions are summarized in~Section II. The systems description and the analog BF designed are presented in Section~III. In Section IV, the SRM problem formulated both under total BS and under per-BS  transmit power constraints are solved, respectively. A realistic imperfect CSI is considered  for the Eves links in Section V. Our simulation results are drawn in Section~VI. Finally, we conclude this paper in Section~VII.

\textit{Notations}: We use the following notations throughout this paper: 
$(\cdot)^T$ and $(\cdot)^H$ denote the transpose and Hermitian transpose, respectively, $\|\cdot\|$ is the Frobenius norm, ${\mathbb{C}}^{x\times y}$ means the space of $x\times y$ complex matrix, {Re($\cdot$)} and {Tr($\cdot$)} denote real number operation and trace operation, respectively. $[\cdot]^+$ denotes the $\max\{0,\cdot\}$, and Diag($a,\ldots,a$) is a diagonal matrix.
\section{Related Works}
\subsection{Physical Layer Security in mmWave Communications}
In~\cite{13Wang}, Wang~{\it{et al.}} invoke both the maximum ratio transmit BF and AN BF  for improving the secrecy capacity, and derive the closed-form expression of the connection probability. Additionally, the throughput is maximized under outage probability constraints. Alotaibi and Hamdi~\cite{14Hamdi} propose a switched phased-array antenna structure for enhancing the secrecy transmission. On this basis,  the authors analyze the secrecy capacity and derive the exact expression of the bit error probability. To reduce the hardware cost, Huang~{\it{et al.}}~\cite{15Huang} consider a pair of  sparse RF chain structures and design a directional hybrid analog/digital precoding for substantially  improving the secrecy transmission capacity. Relying on a uniform linear array at the BS, Zhu~{\it{et al.}} propose a tractable technique for evaluating the secrecy rate, when exploiting the AN~\cite{16Heath}. Their research results show that the secrecy performance can be improved at low transmit powers. Sun~{\it{et al.}}~\cite{17Tang}  investigate the secrecy capacity in a simultaneous wireless information and power transfer (SWIPT) network associated with unmanned aerial vehicular (UAV) relays. The closed-form expression of the average secrecy transmission rate is derived for a  three-dimensional mmWave antenna model and then,  the lower bound of the secrecy transmission rate is maximized. Similarly, Sun~{\it{et al.}}~\cite{18Tao} consider a secrecy SWIPT-aided mmWave ultra-dense network and derive the energy-information coverage probability as well as the effective secrecy throughput both under power splitting and time switching with the aid of stochastic geometry. Wang~{\it{et al.}}~\cite{19Jing} consider the PLS of relay-aided mmWave systems and design a two-stage secure hybrid analog/digital precoding scheme. 

Although the PLS of mmWave communications has been lavishly investigated in~\cite{13Wang}-\cite{19Jing}, most authors only analyze the secrecy capacity and derive the related  closed-form expressions, such as~\cite{13Wang},~\cite{14Hamdi},~\cite{17Tang}, and~\cite{18Tao}. However, there is a paucity of information on the associated BF design and resource optimization. The authors of \cite{15Huang} and~\cite{19Jing} do consider the hybrid analog/digital precoding design problem under the idealized assumption of perfect CSI. 
\subsection{Multicast Transmissions in Cloud Radio Access Network}
Hu~{\it{et al.}}~\cite{6Hu} propose to transmit the users' data from the CP relying on fronthaul multicast, and formulate a joint BF design and  user clustering problem for maximizing the weighted sum rate. Then, an iterative binary search algorithm is proposed.  A cache-based C-RAN using a multicast backhaul/fronthaul is investigated by Dai~{\it{et al.}}~\cite{7Yu}. They also formulate a joint BS cache  and BF design problem for minimizing the transmission delay and then propose a powerful optimization algorithm.  Chen~{\it{et al.}}~\cite{20He} investigate a BF design problem in a multigroup multicast C-RAN relying on limited backhaul capacity, and a robust BF design algorithm is developed for maximizing the worst-user rate.  Similar to~\cite{20He}, Tao~{\it{et al.}}~\cite{21Tao} assume that the users will form a cluster when they request the same content from the CP, hence they are served by the cache-based BS cluster through multicast. They also propose a dynamic BS cache and multicast  BF design scheme for minimizing the weighted backhaul cost and transmit power.  As a future advance, Vu~{\it{et al.}}~\cite{23Vu}  conceive joint BS selection and BF design for maximizing the minimum weighed data rate under finite-capacity fronthaul links. Based on this,  they also develop an upper bound  based on the classic semidefinite relaxation technique and propose a heuristic low-complexity iterative algorithm. Shi~{\it{et al.}}~\cite{24Zhang} investigate the power minimization problem by dynamically selecting active BSs under imperfect CSI, and a  robust and sparse three-stage BF design algorithm is proposed. To minimize the delivery latency in a cache-based multigroup multicasting network, He~{\it{et al.}}~\cite{25He} develop three transmission schemes and evaluate their performance. 

Indeed, substantial research efforts have been invested in designing multicast-aided C-RANs. However, most of them consider  fixed-capacity/wired fronthaul links, such as~\cite{20He}-\cite{25He}, and multicast transmission is only applied to the access links. Although the authors of~\cite{6Hu} and~\cite{7Yu} do investigate the multicast fronthaul/backhaul-aided C-RAN, they mainly focus their attention on the caching and transmission delay problem. Furthermore,  the specifics of mmWave communication and PLS are  not considered in the above contributions.

In contrast to the aforementioned studies, we investigate the transmission security problem in a BS-cooperation-aided mmWave C-RAN relying on a limited-capacity microwave multicast fronthaul.  {In the system considered, there are two techniques conceived for  securing the communication link: $i)$ Carefully constructed antenna beamforming, and $ii)$ Jamming by artificial noise. More explicitly, beamforming improves the SINR by mitigating the interference, while artificial noise is used for confusing the eavesdropper so as to improve the secure rate.}  {The main contributions of this paper can be summarized from the following three perspectives: $i)$ Propose a BS-cooperation-aided C-RAN relying on a microwave multicast fronthaul; $ii)$ Develop a joint downlink transmit beamforming design algorithm (i.e., Algorithm 2) for the cooperating BSs under perfect CSI, and analyze the rank characteristic of the solution as well as give the corresponding proof; $iii)$ Investigate a worst-case sum rate maximization problem by considering imperfect CSI knowledge, and propose an efficient beamforming design algorithm (i.e., Algorithm 3).}
\section{System Model and Analog Beamforming Design}

In this section, we first describe the system model and mmWave channel model. Then, we propose an analog BF design scheme for cooperating BSs. 
\subsection{System Model}
A DL C-RAN relying on a central CP is considered, as shown in Fig.~\ref{systemfigure}. For simplicity,  we consider a single BS cluster consisting of $L$ cooperating BSs\footnote{Although we consider a single BS cluster, the proposed algorithm and analytical results can be extended to multiple BS clusters.}, where all BSs jointly serve $K$ single-antenna users via cooperative mmWave BF. We assume that there are $Z$ single-antenna Eves potentially overhearing the users' message.  The CP equipped with $N$ antennas transmits its signals to the BSs via microwave multicast fronthaul  links, and each BS relies on a single receive antenna (RA) and  $M$ transmit antennas (TAs). Since the BS's TAs operate at a mmWave frequency, we assume that only a single RF chain serves $M$ TAs through a group of quantized phase shifters for reducing both the hardware cost and  energy consumption\footnote{Since each BS is equipped with a single RF chain, to reduce the inter-user interference, we assume that the number of users being served simultaneously is small than that of BSs. 
	When there are are users, they can be served by time division multiplex access
	(TDMA) or by appropriate user scheduling~\cite{add1}.}~\cite{26Hao}. {Additionally,  the wireless fronthaul links rely on out-of-band spectrum, hence there is no interference between the access and fronthaul links.}
\begin{figure}[t]
	\begin{center}
		\includegraphics[width=8cm,height=6cm]{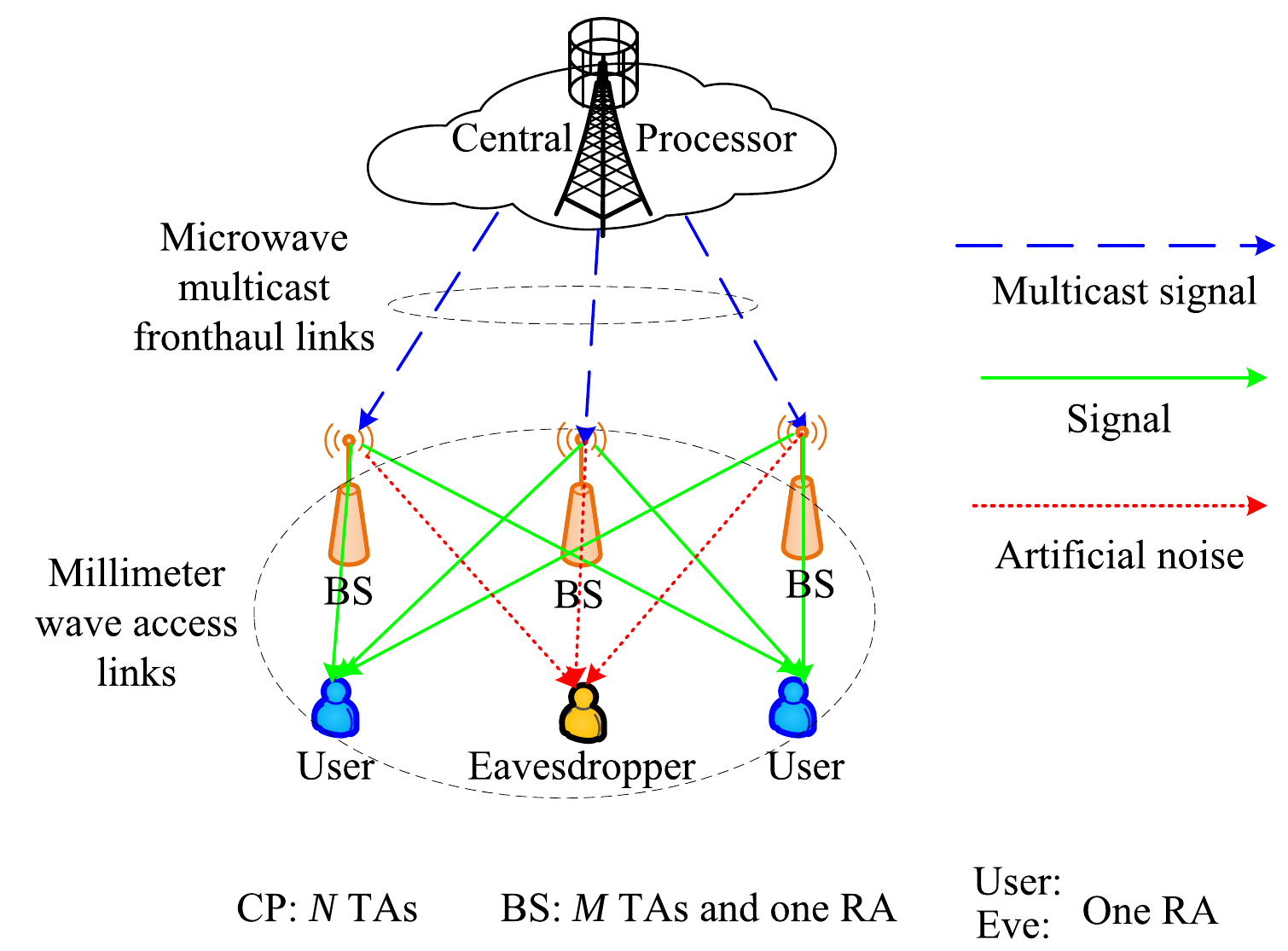}
		\caption{System model of the BS-cooperation-aided C-RAN DL using multicast fronthaul.}
		\label{systemfigure}
	\end{center}
\end{figure}
\subsubsection{Microwave Multicast Fronthaul Link} The  signal received at the $l$th BS can be written as
\begin{eqnarray}
y^{\rm{FH}}_l={\bf{g}}_l{\bf{v}}_0x_0+n_l,
\end{eqnarray}
where ${\bf{g}}_l\in {\mathbb{C}}^{1\times N}$ denotes the fronthaul link's channel vector from the CP to the $l$th BS,  ${\bf{v}}_0\in {\mathbb{C}}^{N\times 1}$ represents the multicast BF vector  transmitted by the CP to the BS cluster, and $x_0$ is the multicast signal with ${\mathbb{E}}\{|x_0|^2\}=1$. Furthermore, $n_l$ is the independent and identically distributed (i.i.d.) additive white Gaussian noise (AWGN)  with zero mean value and follows $\mathcal{CN}(0,N_0)$. 

As a result, the achievable fronthaul rate of the $l$th BS is
\begin{eqnarray}
R_l^{\rm{FH}}=W_{\rm{mc}}\log\left(1+\frac{|{\bf{g}}_l{\bf{v}}_0|^2}{W_{\rm{mc}}N_0}\right),
\end{eqnarray} 
where $W_{\rm{mc}}$ denotes the DL microwave bandwidth. Since the fronthaul multicast rate is limited by the BS owning the worst channel condition, the fronthaul rate provided by the CP can be expressed as
\begin{eqnarray}
R_{\rm{FH}}=\underset{l\in{\mathcal{L}}}{\min}\;\;\;\left\{R_l^{\rm{FH}}\right\},
\end{eqnarray} 
where $\mathcal{L}=\{1,\dots,L\}$ denotes the BS set. 
\subsubsection{Millimeter Wave Access Link} The received signal of the $k$th user can be written as 
\begin{eqnarray}
\begin{aligned}
 y_k^{\rm{AC}}=&{\bf{h}}_k{\bf{F}}\left(\sum\nolimits_{i=1}^K{\bf{v}}_ix_i+{\bf{q}}\right)+n_k\\
 =&\underbrace{{\bf{h}}_k{\bf{F}}{\bf{v}}_kx_k}_{\rm{Desired\;signal}}+\underbrace{{\bf{h}}_k{\bf{F}}\sum\nolimits_{i\neq k}^K{\bf{v}}_ix_i}_{\rm{Interference}}
 +\underbrace{{\bf{h}}_k{\bf{F}}{\bf{q}}}_{\rm{AN}}+\underbrace{n_k}_{\rm{Noise}},
 \end{aligned}
\end{eqnarray}
where ${\bf{h}}_k=[{\bf{h}}_k^1,\dots,{\bf{h}}_k^L]$ represents the DL channel vector from $L$ cooperating BSs to the $k$th user,  ${\bf{h}}_k^l\in{\mathbb{C}}^{1\times M}$ represents the DL channel vector from the $l$th BS to the $k$th user,  while ${\bf{q}}\in{\mathbb{C}}^{L\times 1}$ is the AN transmitted by the BS cluster. It is assumed that ${\bf{q}}\in \mathcal{CN}(0,{\bf{\Lambda}})$, where ${\bf{\Lambda}}$ denotes the AN covariance matrix to be optimized, while ${\bf{v}}_k\in{\mathbb{C}}^{L\times 1}$ and $x_k$, respectively, denote the digital BF vector and the signal desired for the $k$th user, and  $n_k$ is the i.i.d. AWGN  with $\mathcal{CN}(0,N_0)$.  ${\bf{F}}\in{\mathbb{C}}^{ML\times L}$ is the analog BF, which can be expressed~as
\begin{eqnarray}\label{AG}
	{\bf{F}}=\left[
	\begin{array}{cccc}
	{\bf{f}}_{1} & {\bf{0}}&\cdots & {\bf{0}} \\
	{\bf{0}}&{\bf{f}}_{2} & \cdots & {\bf{0}}\\
	\vdots & \vdots & \ddots&\vdots \\
	{\bf{0}} & {\bf{0}} &\cdots & {\bf{f}}_{L} \\
	\end{array}
	\right],
	\end{eqnarray}
where ${\bf{f}}_{l}\in{\mathbb{C}}^{M\times 1}$ denotes the analog BF vector designed by the $l$th BS and all elements of ${\bf{f}}_{l}$ have the same amplitude but different phases~\cite{27Dai}, namely $|{\bf{f}}_{l}(m)|=1/\sqrt{M}\;(m\in{\mathcal{M}})$, where ${\mathcal{M}}=\{1,\dots,M\}$ is the antenna set of each BS and ${\bf{f}}_{l}(m)$ represents the $m$th element of ${\bf{f}}_{l}$.

Accordingly, the achievable rate of the $k$th user can be expressed as
\begin{eqnarray}\label{ACRate}
	R_k^{\rm{AC}}\!=\!W_{\rm{mm}}\log\left(1\!+\!\frac{|{\bf{h}}_k{\bf{F}}{\bf{v}}_k|^2}{\sum\nolimits_{i\neq k}^K|{\bf{h}}_k{\bf{F}}{\bf{v}}_i|^2\!+\!{\bf{h}}_k{\bf{F}}{\bf{\Lambda}}({\bf{h}}_k{\bf{F}})^H\!+\!W_{\rm{mm}}N_0}\right),
\end{eqnarray} 
where $W_{\rm{mm}}$ denotes the mmWave bandwidth.

On the other hand, the Eves try to intercept the message of the $k$th user, and the  signal received at the $z$th Eve can be written as
\begin{eqnarray}
\begin{aligned}
y_z^{\rm{EV}}=&{\bf{h}}_z^{\rm{e}}{\bf{F}}\left(\sum\nolimits_{i=1}^K{\bf{v}}_ix_i+{\bf{q}}\right)+n_z\\
=&\underbrace{{\bf{h}}_z^{\rm{e}}{\bf{F}}{\bf{v}}_kx_k}_{\rm{Desired\;signal}}+\underbrace{{\bf{h}}_z^{\rm{e}}{\bf{F}}\sum\nolimits_{i\neq k}^K{\bf{v}}_ix_i}_{\rm{Interference}}+\underbrace{{\bf{h}}_z^{\rm{e}}{\bf{F}}{\bf{q}}}_{\rm{AN}}+\underbrace{n_z}_{\rm{Noise}},
\end{aligned}
\end{eqnarray}
where ${\bf{h}}_z^{\rm{e}}=[{\bf{h}}_z^{\rm{e}1},\dots,{\bf{h}}_k^{{\rm{e}}L}]$ denotes the DL channel vector from $L$ cooperating BSs to the $z$th Eve,  ${\bf{h}}_z^{{\rm{e}}l}\in{\mathbb{C}}^{1\times M}$ represents the DL channel vector from the $l$th BS to the $z$th Eve. Then, the $z$th Eve can wiretap the $k$th user at a rate of 
 \begin{eqnarray}
 R_{k,z}^{\rm{EV}}\!=\!W_{\rm{mm}}\log\left(1\!+\!\frac{|{\bf{h}}_z^{\rm{e}}{\bf{F}}{\bf{v}}_k|^2}{\sum\nolimits_{i\neq k}^K|{\bf{h}}_z^{\rm{e}}{\bf{F}}{\bf{v}}_i|^2\!+\!{\bf{h}}_z^{\rm{e}}{\bf{F}}{\bf{\Lambda}}({\bf{h}}_z^{\rm{e}}{\bf{F}})^H\!+\!W_{\rm{mm}}N_0}\right).
 \end{eqnarray} 
 
Finally, the achievable secrecy rate of the $k$th user can be expressed as~\cite{28Zhou}
\begin{eqnarray}
R_{k}^{\rm{SR}}=\left[R_k^{\rm{AC}}-\underset{z\in{\mathcal{Z}}}{\max}\left\{R_{k,z}^{\rm{EV}}\right\}\right]^+,
\end{eqnarray}
where ${\mathcal{Z}}=\{1,\dots,Z\}$ denotes the Eves set.
\subsubsection{Millimeter Wave Channel Model}
In this paper, we adopt  the mmWave channel model having $C$ scatters, where each scatter includes a single propagation path between the $l$th BS and the $k$th user~\cite{29Heath}. Therefore, the mmWave channel ${\bf{h}}_{k}^l$ can be described~as
\begin{eqnarray}
{\bf{h}}_{k}^l=\sqrt{\frac{M}{C}}\sum_{c=1}^C\alpha_{k,l}^c{\bf{a}}(\theta_{k,l}^c),
\end{eqnarray}
where $\alpha_{k,l}^c$ denotes the complex gain of the $c$-th path, $\theta_{k,l}^c\in[0,\pi]$ is the azimuth angle of arrival for the $c$-th path, and ${\bf{a}}(\theta_{k,l}^c)$ represents the antenna array's steering vector that can  be expressed~as
\begin{eqnarray}\label{eq5}
{\bf{a}}(\theta_{k,l}^c)={\frac{1}{\sqrt{M}}}\left[1,e^{j\frac{2\pi}{\lambda}d\sin(\theta_{k,l}^c)},\ldots, e^{j\frac{2\pi}{\lambda}(M-1)d\sin(\theta_{k,l}^c)}\right],
\end{eqnarray}
where $d$ and $\lambda$, respectively, denote the inter-antenna distance and signal wavelength. The channel model of Eves has a similar expression  to (10), which is omitted it here.
\subsection{Analog Beamforming Design For Cooperating BSs}
In order to reduce the total complexity of the joint BF (including the analog, digital and multicast BF) design,  we first design the analog BF (i.e., $\bf{F}$), and then obtain the equivalent mmWave channel. In  practice, only a quantized phase can be realized~\cite{30Dai}. We assume that $B$-bit quantized phase shifters are used, and the non-zero elements of the analog BF $\bf{F}$ should belong to
\begin{eqnarray}\label{phase}
	\frac{1}{\sqrt{M}}\left\{e^{j\frac{2\pi \phi}{2^B}}:\phi=0,1,\dots,2^{B-1}\right\}.
\end{eqnarray}

According to~(\ref{AG}), we need to design the analog BF for $L$ BSs, respectively. For the $k$th user, we have ${\bf{h}}_k{\bf{F}}=[{\bf{h}}_k^l{\bf{f}}_1,\dots,{\bf{h}}_k^L{\bf{f}}_L]$, and thus we can maximize the array again $|{\bf{h}}_k^l{\bf{f}}_l|$ by appropriately selecting the optimal quantized phase from~(\ref{phase}). For example, for the analog BF vector ${\bf{f}}_l$  of the $l$th BS, the $m$th angle of ${\bf{f}}_l$ can be selected as
\begin{eqnarray}\label{analog}
		\hat{\phi}=\underset{\phi\in\{0,1,\dots,2^{B-1}\}}{\arg {\rm{min}}}\left|{{\angle}}({\bf{h}}_{k}^l(m))-\frac{2\pi \phi}{2^B}\right|,
\end{eqnarray}
where ${{\angle}}(\cdot)$ denotes the angle of $(\cdot)$, and we have 
\begin{eqnarray}
	{\bf{f}}_l(m)=\frac{1}{\sqrt{M}}e^{j\frac{2\pi \hat{\phi}}{2^B}}.
\end{eqnarray}

Additionally, to guarantee user fairness, the BF designed  for the $L$ BSs should not only  maximize the single user's array gain, and each user should be allocated at least one BS to maximize its array gain. We summarize the analog BF design scheme in Algorithm~\ref{algorithm1}. 

\begin{algorithm}[t]
	{{\caption{Analog Beamforming Design for $L$ cooperating BSs.}
			\label{algorithm1}
			{\bf{Initialize}} ${\bf{F}}={\bf{0}}_{{ML}\times L}$, $l=1$.\\
			\While {$l\leq L$}
			{\For{$k=1:K$}{
					\For{$m=1:M$}{Solve (\ref{analog}) and obtain ${\bf{f}}_l(m)=\frac{1}{\sqrt{M}}e^{j\frac{2\pi \hat{\phi}}{2^B}}.$}
					${\bf{F}}((l\!-\!1)M\!+\!1:Ml,l)={\bf{f}}_l$, $l\leftarrow l+1$.\\
					{\bf{{If}}} {$l>L$} {\bf{{then}}} Break.}}}
	}
\end{algorithm}
\section{Secrecy Rate Maximization Problem Formulation and Solution}
In this section, we first formulate the SRM problem under the total BS transmit  power as well as  CP transmit power constraints, and an efficient iterative algorithm is proposed. Then, a per-BS transmit power constraint is considered and a similar SRM algorithm is developed.
\subsection{Problem Formulation Under Total BS Transmit Power Constraint}
After designing the analog BF ${\bf{F}}$, we can achieve the equivalent channel of the $k$th user as $\overline{{\bf{h}}}_k={\bf{h}}_k{\bf{F}}$. Similarly, the equivalent channel of the $z$th Eve can be expressed as $\overline{{\bf{h}}^e}_z={\bf{h}}^e_k{\bf{F}}$.

We assume that the multicast fronthaul transmission time frame includes $K$ slots, and each slot is used to transmit a single user's message from the CP to $L$ cooperating BSs. For simplicity, the frame length and the $k$th slot length are, respectively, assumed to be 1 and $t_k$, thus we have $\sum_{k=1}^{K}t_k=1$. According to the specific constraint that the achievable capacity of the $k$th user must be lower than the fronthaul capacity provided by the CP for the $k$th user~\cite{32hao}, we have   
\begin{eqnarray}\label{fhcapacity}
R_k^{\rm{AC}}\leq t_kR_{\rm{FH}}, k\in\mathcal{K},
\end{eqnarray}
where $\mathcal{K}=\{1,\cdots,K\}$ denotes the user set. {Since $\sum_{k=1}^{K}t_k=1$ and $R_{\rm{FH}}=\underset{l\in{\mathcal{L}}}{\min}\;\;\;\left\{R_l^{\rm{FH}}\right\}$, we can obtain (\ref{fhcapacity}) as follows:
\begin{eqnarray}\label{fhconstraint}
\sum_{k=1}^{K}R_k^{\rm{AC}}\leq \sum_{k=1}^{K}t_kR_{\rm{FH}}=\underset{l\in{\mathcal{L}}}{\min} \left\{R_l^{\rm{FH}}\right\}\sum_{k=1}^{K}t_k=\underset{l\in{\mathcal{L}}}{\min} \left\{R_l^{\rm{FH}}\right\}.
\end{eqnarray}}
Next, we have to state that as long as there is at least one scenario in which we can obtain (\ref{fhcapacity}) from (\ref{fhconstraint}), then we can find one case $R_k^{\rm{AC}}\leq t_kR_{\rm{FH}}, k\in\mathcal{K}$ from (\ref{fhconstraint}) to satisfy the fronthaul capacity constraint. Next, we provide the proof. We rewrite (\ref{fhconstraint}) as follows:
\begin{eqnarray}\label{fhconstraintA}
\sum_{k=1}^{K}R_k^{\rm{AC}}\leq \underset{l\in{\mathcal{L}}}{\min} \left\{R_l^{\rm{FH}}\right\}.
\end{eqnarray}
Since $\sum_{k=1}^{K}t_k=1$ and $R_{\rm{FH}}=\underset{l\in{\mathcal{L}}}{\min}\;\;\;\left\{R_l^{\rm{FH}}\right\}$, we have 
\begin{eqnarray}\label{fhconstraintB}
\begin{aligned}
\underset{l\in{\mathcal{L}}}{\min} \left\{R_l^{\rm{FH}}\right\}=&\underset{l\in{\mathcal{L}}}{\min} \left\{R_l^{\rm{FH}}\right\}\cdot 1=\underset{l\in{\mathcal{L}}}{\min} \left\{R_l^{\rm{FH}}\right\}\sum_{k=1}^{K}t_k\\
=&\sum_{k=1}^{K}t_kR_{\rm{FH}}\geq \sum_{k=1}^{K}R_k^{\rm{AC}}.
\end{aligned}
\end{eqnarray} 
According to (\ref{fhconstraintB}), we have 
\begin{eqnarray}\label{fhconstraintC}
t_1+t_2+...+t_K\geq \frac{R_1^{\rm{AC}}}{R_{\rm{FH}}}+\frac{R_2^{\rm{AC}}}{R_{\rm{FH}}}+...+\frac{R_K^{\rm{AC}}}{R_{\rm{FH}}}.
\end{eqnarray} 
Hence, there always exists one case satisfying
\begin{eqnarray}\label{fhcapacityA}
R_k^{\rm{AC}}\leq t_kR_{\rm{FH}}, k\in\mathcal{K}.
\end{eqnarray} 
which completes the proof.

Finally, we formulate the joint BF and AN variance design problem to maximize the secrecy rate, which can be cast as
\begin{subequations}\label{OptA}
	\begin{align}
&\underset{\left\{{\bf{v}}_0,\{{\bf{v}}_k\},{\bf{\Lambda}}\right\}}{\rm{max}}\;\sum_{k=1}^{K}\left[R_k^{\rm{AC}}-\underset{z\in{\mathcal{Z}}}{\max}\left\{R_{k,z}^{\rm{EV}}\right\}\right]^+ \label{OptA0}\\
	{\rm{s.t.}}&\;\; \sum_{k=1}^{K}R_k^{\rm{AC}}\leq \underset{l\in{\mathcal{L}}}{\min}\;\left\{R_l^{\rm{FH}}\right\},\label{OptA1}\\
	           &\;\;\sum_{k=1}^{K}||{\bf{F}}{\bf{v}}_k||^2+{\rm{Tr}}({\bf{F}}^H{\bf{F}}{\bf{\Lambda}})\leq P_{\rm{max}}^{\rm{BS}},\label{OptA2}\\
	           &\;\;||{\bf{v}}_0||^2\leq P_{\rm{max}}^{\rm{AC}},\label{OptA3}
	\end{align}
\end{subequations} 
where (\ref{OptA1}) denotes the fronthaul capacity constraint, (\ref{OptA2}) is the total  transmit power constraint of the $L$ cooperating BSs, and (\ref{OptA3}) represents the CP transmit power constraint. However, it is a challenge to directly solve (\ref{OptA}) due to the non-convex nature of the objective function (OF)~(\ref{OptA0}) and constraint (\ref{OptA1}).
\subsection{Problem Solution}  
{It is clear that the secrecy rate should not be lower than zero during optimization, otherwise the solution is not meaningful. Therefore, we can move the operation $[\;]^+$ of the OF~(\ref{OptA0})}\footnote{{Although  $[\;]^+$ is still concave function~\cite{add2}, we move it for convenience.}}. By introducing the auxiliary variables $\{\beta_k\}$ and $\{\gamma_k^z\}$, (\ref{OptA0}) can be transformed as follows:
\begin{eqnarray}\label{Obj1}
\sum_{k=1}^{K}W_{\rm{mm}}\left(\log(1+\beta_k)-\log(1+\gamma_k^z)\right),
\end{eqnarray}
with
\setlength{\mathindent}{0cm}
\begin{subequations}
	\begin{align}
\beta_k    & \leq \frac{|\overline{{\bf{h}}}_k{\bf{v}}_k|^2}{\sum\nolimits_{i\neq k}^K|\overline{{\bf{h}}}_k{\bf{v}}_i|^2+\overline{{\bf{h}}}_k{\bf{\Lambda}}(\overline{{\bf{h}}}_k)^H+W_{\rm{mm}}N_0}, k\in \mathcal{K},                        \\
		\gamma_k^z & \geq \frac{|\overline{{\bf{h}}^e}_z{\bf{v}}_k|^2}{\sum\nolimits_{i\neq k}^K|\overline{{\bf{h}}^e}_z{\bf{v}}_i|^2+\overline{{\bf{h}}^e}_z{\bf{\Lambda}}(\overline{{\bf{h}}^e}_z)^H+W_{\rm{mm}}N_0}, k\in \mathcal{K}, z\in \mathcal{Z}.
	\end{align}
\end{subequations}  

To this end, we define the BF matrix ${\bf{V}}_k={\bf{v}}_k{\bf{v}}_k^H$ and  ${\bf{V}}_0={\bf{v}}_0{\bf{v}}_0^H$, and (\ref{OptA}) can be reformulated as the following optimization problem
\setlength{\mathindent}{0cm}
\begin{subequations}\label{OptB}
	\begin{align}
&\underset{\left\{{\bf{V}}_0,\{{\bf{V}}_k\},{\bf{\Lambda}},\{\beta_k\},\{\gamma_k^z\}\right\}}{\rm{max}}\;\sum_{k=1}^{K}W_{\rm{mm}}\left(\log(1+\beta_k)-\log(1+\gamma_k^z)\right) \label{OptB0}\\
	{\rm{s.t.}}&\;\beta_k  \leq \frac{{\rm{Tr}}(\overline{{\bf{H}}}_k{\bf{V}}_k)}{\sum\nolimits_{i\neq k}^K{\rm{Tr}}(\overline{{\bf{H}}}_k{\bf{V}}_i)+{\rm{Tr}}(\overline{{\bf{H}}}_k{\bf{\Lambda}})+W_{\rm{mm}}N_0}, k\in \mathcal{K},\label{OptB1}\\
	&\;\gamma_k^z \!\geq\! \frac{{\rm{Tr}}(\overline{{\bf{H}}^e}_z{\bf{V}}_k)}{\sum\nolimits_{i\neq k}^K{\rm{Tr}}(\overline{{\bf{H}}^e}_z{\bf{V}}_i)\!+\!{\rm{Tr}}(\overline{{\bf{H}}^e}_z{\bf{\Lambda}})\!+\!W_{\rm{mm}}N_0}, k\!\in\! \mathcal{K},z\!\in\! \mathcal{Z},\label{OptB2}\\
	&\;\sum_{k=1}^{K}{\rm{Tr}}({\bf{V}}_k)+{\rm{Tr}}({\bf{\Lambda}})\leq P_{\rm{max}}^{\rm{BS}},\label{OptB3}\\
	&\;{\rm{Tr}}({\bf{V}}_0)\leq P_{\rm{max}}^{\rm{AC}}, \label{OptB4}\\
	&\;{\rm{Rank}}({\bf{V}}_k)=1, k\in \{0,\mathcal{K}\},\label{OptB5}\\
	&\;{\bf{V}}_k\succeq 0, k\in \{0,\mathcal{K}\},\label{OptB6}\\
	&\;(\rm{\ref{OptA1}}),\label{OptB7}
	\end{align}
\end{subequations} 
where $\overline{{\bf{H}}}_k=(\overline{{\bf{h}}}_k)^H\overline{{\bf{h}}}_k$, $\overline{{\bf{H}}^{\rm{e}}}_k=(\overline{{\bf{h}}^{\rm{e}}}_k)^H\overline{{\bf{h}}^{\rm{e}}}_k$, and  $||{\bf{F}}{\bf{v}}_k||^2={\rm{Tr}}({\bf{F}}^H{\bf{F}}{\bf{V}}_k)={\rm{Tr}}({\bf{V}}_k)$ due to ${\bf{F}}^H{\bf{F}}={\bf{I}}$. However, problem (\ref{OptB}) is still difficult to solve due to (\ref{OptB0})-(\ref{OptB2}), (\ref{OptB5}) and (\ref{OptA1}). To tackle this problem, we first transform (\ref{OptB1})  into a convex constraint. By bringing auxiliary variable $\{\varepsilon_k
\}$, we have 
\setlength{\mathindent}{1cm}
\begin{subequations}
	\begin{align}
   &\beta_k\varepsilon_k     \leq {\rm{Tr}}(\overline{{\bf{H}}}_k{\bf{V}}_k), k\in \mathcal{K}, \label{C11} \\
	 &\varepsilon_k  \geq \sum\nolimits_{i\neq k}^K{\rm{Tr}}(\overline{{\bf{H}}}_k{\bf{V}}_i)+{\rm{Tr}}(\overline{{\bf{H}}}_k{\bf{\Lambda}})+W_{\rm{mm}}N_0, k\in \mathcal{K}.\label{C12}  
	\end{align}
\end{subequations}

It is plausible that (\ref{C12}) is a convex constraint. Additionally, an upper bound of  $\beta_k\varepsilon_k$ can be obtained as follows~\cite{33SCA1}:
\begin{eqnarray}\label{C21}
\frac{\beta_k^{[n]}}{2\varepsilon_k^{[n]}}\varepsilon_k^2+\frac{\varepsilon_k^{[n]}}{2\beta_k^{[n]}}\beta_k^2\geq \beta_k\varepsilon_k,\;\;k\in \mathcal{K}, 
\end{eqnarray}
where $\beta_k^{[n]}$ and $\varepsilon_k^{[n]}$ denote the values of $\beta_k$ and $\varepsilon_k$ at the $n$th iteration, respectively. We can thus transform (\ref{C11}) into the following convex constraint
\begin{eqnarray}\label{C22}
\frac{\beta_k^{[n]}}{2\varepsilon_k^{[n]}}\varepsilon_k^2+\frac{\varepsilon_k^{[n]}}{2\beta_k^{[n]}}\beta_k^2\leq {\rm{Tr}}(\overline{{\bf{H}}}_k{\bf{V}}_k), \;\; k\in \mathcal{K}.
\end{eqnarray}

Next, we introduce the auxiliary variables $\{\zeta_k^z\}$ and $\{\mu_k^z\}$,  and  split (\ref{OptB2}) into the following constraints
\setlength{\mathindent}{0cm}
\begin{subequations}
	\begin{align}
&{\rm{Tr}}(\overline{{\bf{H}}^e}_z{\bf{V}}_k)-\gamma_k^zW_{\rm{mm}}N_0\leq \zeta_k^z, k\!\in\! \mathcal{K},z\!\in\! \mathcal{Z}, \label{C31} \\
	\;\;\;\;&\zeta_k^z  \leq (\mu_k^z)^2, k\!\in\! \mathcal{K},z\!\in\! \mathcal{Z},\label{C32}\\
	\;\;\;\;&(\mu_k^z)^2\leq \gamma_k^z\left(\sum\nolimits_{i\neq k}^K{\rm{Tr}}(\overline{{\bf{H}}^e}_z{\bf{V}}_i)\!+\!{\rm{Tr}}(\overline{{\bf{H}}^e}_z{\bf{\Lambda}})\right), k\!\in\! \mathcal{K},z\!\in\! \mathcal{Z}.\label{C33}
	\end{align}
\end{subequations}

\newcounter{mytempeqncnt}
\begin{figure*}[b]
	\hrulefill
	\setcounter{mytempeqncnt}{\value{equation}}
	\setcounter{equation}{39}
	\setlength{\mathindent}{0cm}
	\begin{align}\label{lag}
	\mathcal{F}({\bf{\Pi}}_1)=&\sum_{k=1}^{K}W_{\rm{mm}}f\left(\beta_k,\gamma_k^{z},\gamma_k^{z[n]}\right)\!+\!\psi_1\left(P_{\rm{max}}^{\rm{BS}}-\sum_{k=1}^{K}{\rm{Tr}}({\bf{V}}_k)\!-\!{\rm{Tr}}({\bf{\Lambda}})\right)\!+\!\psi_2\left(P_{\rm{max}}^{\rm{AC}}\!-\!{\rm{Tr}}({\bf{V}}_0)\right)\!+\!\sum_{k=1}^{K}\psi_3^k\left(\varepsilon_k\!-\!\sum\nolimits_{i\neq k}^K{\rm{Tr}}(\overline{{\bf{H}}}_k{\bf{V}}_i)\!-\!{\rm{Tr}}(\overline{{\bf{H}}}_k{\bf{\Lambda}})\!-\!W_{\rm{mm}}N_0\right)\nonumber\\
	+&\sum_{k=1}^{K}\psi_4^k\left( {\rm{Tr}}(\overline{{\bf{H}}}_k{\bf{V}}_k)\!-\!\frac{\beta_k^{[n]}}{2\varepsilon_k^{[n]}}\varepsilon_k^2\!-\!\frac{\varepsilon_k^{[n]}}{2\beta_k^{[n]}}\beta_k^2\right)+\sum_{z=1}^{Z}\sum_{k=1}^{K}\psi_5^{z,k}\left(\zeta_k^z\!+\!\gamma_k^zW_{\rm{mm}}N_0\!-\!{\rm{Tr}}(\overline{{\bf{H}}^e}_z{\bf{V}}_k)\right)\!+\!\sum_{z=1}^{Z}\sum_{k=1}^{K}\psi_6^{z,k}\left(\sum\nolimits_{i\neq k}^K{\rm{Tr}}(\overline{{\bf{H}}^e}_z{\bf{V}}_i)\!+\!{\rm{Tr}}(\overline{{\bf{H}}^e}_z{\bf{\Lambda}})\right)\nonumber\\
	+&\sum_{k=1}^{K}\psi_7^{k}\left(\theta_k+\tau_kW_{\rm{mm}}N_0-{\rm{Tr}}(\overline{{\bf{H}}}_k{\bf{V}}_k)\right)+\sum_{k=1}^{K}\psi_8^{k}\left(\sum\nolimits_{i\neq k}^K{\rm{Tr}}(\overline{{\bf{H}}}_k{\bf{V}}_i)\!+\!{\rm{Tr}}(\overline{{\bf{H}}}_k{\bf{\Lambda}})\right)+\sum_{k=0}^{K}{\rm{Tr}}({{\bf{\Omega}}_k\bf{V}}_k)\nonumber\\
	+&\sum_{l=1}^{L}\psi_9^{l}\left({\rm{Tr}}({\bf{G}}_l{\bf{V}}_0)-W_{\rm{mc}}N_0(e^{\omega/\eta}-1)\right)+\psi.
	\end{align}
	where
	${\bf{\Pi}}_1=\left\{{\bf{V}}_0,\{{\bf{V}}_k\},{\bf{\Lambda}},\{\varepsilon_k\},\{\lambda_k\},\{\theta_k\},\{\zeta_k^z\},\{\mu_k^z\},\{\tau_k\},\omega,\{\beta_k\},\{\gamma_k^z\},\psi_1,\psi_2, \{\psi_3^k\},\{\psi_4^k\},\{\psi_5^{z,k}\},\{\psi_6^{z,k}\},\{\psi_7^k\},\{\psi_8^k\},\{\psi_9^{l}\},\{{\bf{\Omega}}_k\},\psi \right\}$ and $\psi$ denotes other Lagrangian terms that do not affect our analysis. 
	\setcounter{equation}{\value{mytempeqncnt}}
\end{figure*}

It can be readily seen that (\ref{C31}) is a convex constraint.  As for (\ref{C32}), by the first-order Taylor series expansion, the quadratic term $(\mu_k^z)^2$ can be approximated as 
\begin{eqnarray}\label{C410}
	\left(\mu_k^z\right)^2\approx 2\mu_k^{z[n]}\mu_k^z-\left(\mu_k^{z[n]}\right)^2,k\!\in\! \mathcal{K},z\!\in\! \mathcal{Z},
\end{eqnarray} 
and (\ref{C32}) can be transformed into the following convex constraint 
\begin{eqnarray}\label{C41}
\zeta_k^z\leq 2\mu_k^{z[n]}\mu_k^z-\left(\mu_k^{z[n]}\right)^2,k\!\in\! \mathcal{K},z\!\in\! \mathcal{Z},
\end{eqnarray}
where $\mu_k^{z[n]}$ denotes the value of $\mu_k^z$ at the $n$th iteration.
As for the non-linear constraint~(\ref{C33}), we can transform it into the following convex linear matrix inequality (LMI) constraint as
\begin{eqnarray}\label{C51A}
\left[ \begin{array}{ccc}
\gamma_k^z & \mu_k^z \\
\mu_k^z & \sum\nolimits_{i\neq k}^K{\rm{Tr}}(\overline{{\bf{H}}^e}_z{\bf{V}}_i)\!+\!{\rm{Tr}}(\overline{{\bf{H}}^e}_z{\bf{\Lambda}}) 
\end{array} 
\right ]\succeq {\bf{0}},\;\;k\!\in\! \mathcal{K},z\!\in\! \mathcal{Z}.\label{C51}
\end{eqnarray}

Finally, upon introducing auxiliary variables $\{\tau_k\}$ and $\omega$, (\ref{OptB}) can be transformed into the following optimization problem
\begin{subequations}\label{OptC}
	\begin{align}
	&\underset{\left\{{\bf{V}}_0,\{\varepsilon_k\},\{\zeta_k^z\},\{\mu_k^z\},\{\tau_k\},\omega,\{{\bf{V}}_k\},{\bf{\Lambda}},\{\beta_k\},\{\gamma_k^z\}\right\}}{\rm{max}}\;\sum_{k=1}^{K}W_{\rm{mm}}\left(\log(1\!+\!\beta_k)\!-\!\log(1\!+\!\gamma_k^z)\right) \label{OptC0}\\
	&{\rm{s.t.}}\;\;\tau_k  \geq \frac{{\rm{Tr}}(\overline{{\bf{H}}}_k{\bf{V}}_k)}{\sum\nolimits_{i\neq k}^K{\rm{Tr}}(\overline{{\bf{H}}}_k{\bf{V}}_i)+{\rm{Tr}}(\overline{{\bf{H}}}_k{\bf{\Lambda}})+W_{\rm{mm}}N_0}, k\in \mathcal{K},\label{OptC1}\\
	&\;\;\omega\leq \eta\log\left(1+\frac{{\rm{Tr}}({\bf{G}}_l{\bf{V}}_0)}{W_{\rm{mc}}N_0}\right), l\in\mathcal{L},\label{OptC2}\\
	&\;\;\omega-\sum\nolimits_{k=1}^{K}\log(1+\tau_k)\geq 0,\label{OptC3}\\
	&\;\;{\rm(\ref{OptB3})\!-\!(\ref{OptB6}), (\ref{C12}), (\ref{C22}), (\ref{C31}), (\ref{C41}), (\ref{C51})},
	\end{align}
\end{subequations} 
where ${\bf{G}}_l={\bf{g}}_l^H{\bf{g}}_l$, and $\eta=W_{\rm{mc}}/W_{\rm{mm}}$.

Since $\log(1+\beta_k)$ and $\log(1+\gamma_k^z)$ are convex functions, the OF (\ref{OptC0}) is constituted by a difference of convex~(DC) program~\cite{34DCP}, and usually CCCP is used to solve the DC program. In fact, the main idea of the CCCP is to transform (\ref{OptC0}) into a convex function by approximation, and then the approximated convex problem is iteratively solved until the result converges. Based on this, we consider to use the first-order Taylor approximation for $\log(1+\gamma_k^z)$, namely
\begin{eqnarray}
	\log\left(1+\gamma_k^z\right)\approx \log\left(1+\gamma_k^{z[n]}\right)+\frac{\gamma_k^z-\gamma_k^{z[n]}}{1+\gamma_k^{z[n]}},
\end{eqnarray}
where $\gamma_k^{z[n]}$ denotes the value of $\gamma_k^z$ at the $n$th iteration. Then, the OF can be transformed into the following convex function
\begin{eqnarray}
	f\left(\beta_k,\gamma_k^{z},\gamma_k^{z[n]}\right)=\log\left(1+\beta_k\right)-\log\left(1+\gamma_k^{z[n]}\right)-\frac{\gamma_k^z-\gamma_k^{z[n]}}{1+\gamma_k^{z[n]}}.
\end{eqnarray}

Additionally, one can observe that constraint (\ref{OptC3}) is also a DC program. Similarly, we can transform it into the following convex constraint
\begin{eqnarray}
	\omega-\sum_{k=1}^{K}\log\left(1+\tau_k^{[n]}\right)-\frac{\tau_k-\tau_k^{[n]}}{1+\tau_k^{[n]}}\geq 0.
\end{eqnarray}

The non-convex constraint  (\ref{OptC1}) is yet to be dealt with.  Similar to (\ref{OptB2}), by bringing auxiliary variable $\{\theta_k\}$ and $\{\lambda_k\}$, (\ref{OptC1}) can be split into the following constraints
\begin{subequations}
	\begin{align}
&{\rm{Tr}}(\overline{{\bf{H}}}_k{\bf{V}}_k)-\tau_kW_{\rm{mm}}N_0\leq \theta_k, k\!\in\! \mathcal{K}, \label{C61} \\
	&\theta_k  \leq (\lambda_k)^2, k\!\in\! \mathcal{K},\label{C62}\\
	&(\lambda_k)^2\leq \tau_k\left(\sum\nolimits_{i\neq k}^K{\rm{Tr}}(\overline{{\bf{H}}}_k{\bf{V}}_i)\!+\!{\rm{Tr}}(\overline{{\bf{H}}}_k{\bf{\Lambda}})\right), k\!\in\! \mathcal{K}.\label{C63}
	\end{align}
\end{subequations}

According to (\ref{C410})-(\ref{C51A}), the non-convex constraints (\ref{C62}) and (\ref{C63}) can be formulated the following convex ones 
\begin{eqnarray}\label{C71}
2\lambda_k^{[n]}\lambda_k-\left(\lambda_k^{[n]}\right)^2-\theta_k\geq 0,k\!\in\! \mathcal{K},
\end{eqnarray}
where $\lambda_k^{[n]}$ represents the value of $\lambda_k$ at the $n$th iteration, and 
\begin{eqnarray}\label{C81}
\left[ \begin{array}{ccc}
\tau_k & \lambda_k \\
\lambda_k & \sum\nolimits_{i\neq k}^K{\rm{Tr}}(\overline{{\bf{H}}}_k{\bf{V}}_i)\!+\!{\rm{Tr}}(\overline{{\bf{H}}}_k{\bf{\Lambda}}) 
\end{array} 
\right ]\succeq {\bf{0}},\;\;k\!\in\! \mathcal{K}.
\end{eqnarray}  

Finally, the SRM  problem can be transformed into
\begin{subequations}\label{OptD}
	\begin{align}
		&\underset{\left\{{\bf{V}}_0,\{\varepsilon_k\},\{\lambda_k\},\{\theta_k\},\{\zeta_k^z\},\{\mu_k^z\},\{\tau_k\},\omega,\{{\bf{V}}_k\},{\bf{\Lambda}},\{\beta_k\},\{\gamma_k^z\}\right\}}{\rm{max}}\;\sum_{k=1}^{K}W_{\rm{mm}}f\left(\beta_k,\gamma_k^{z},\gamma_k^{z[n]}\right) \label{OptD0}\\
	&{\rm{s.t.}}\;\;{\rm(\ref{OptB3})-(\ref{OptB6}), (\ref{C12}), (\ref{C22}), (\ref{C31}), (\ref{C41}), (\ref{C51}),(\ref{OptC2}) }, \nonumber\\
	&\;\;\;\;\;\;{\rm(\ref{C61}), (\ref{C71}), (\ref{C81})}.
	\end{align}
\end{subequations}
 
It is plausible that only the rank-one constraint (\ref{OptB5}) is non-convex in (\ref{OptD}).  By SDP relaxation (i.e., removing the rank-one constraint), (\ref{OptD}) will become a convex optimization problem and can be solved by standard convex optimization techniques, such as the interior-point method of~\cite{35CVX}. 
Finally, to obtain the solution of optimization problem~(\ref{OptA}), we need to iteratively solve~(\ref{OptD}). Specifically, we first initialize the feasible solution $\{\beta_k^{[n]}\}$, $\{\varepsilon_k^{[n]}\}$, $\{\mu_k^{z[n]}\}$, $\{\gamma_k^{z[n]}\}$, $\{\tau_k^{[n]}\}$ as well as $\{\lambda_k^{[n]}\}$, and the optimal solution of (\ref{OptD}) can be obtained by a classic convex optimization algorithm. Then, $\{\beta_k^{[n+1]}\}$, $\{\varepsilon_k^{[n+1]}\}$, $\{\mu_k^{z[n+1]}\}$, $\{\gamma_k^{z[n+1]}\}$, $\{\tau_k^{[n+1]}\}$ and $\{\lambda_k^{[n]}\}$ are updated according to the solution obtained  at previous iteration, and problem~(\ref{OptD}) is resolved until the results converge or the iteration index reaches its maximum value. Additionally, since~(\ref{OptD}) without rank-one constraint is a convex optimization problem, iteratively updating all variables will increase or at least maintain the value of the OF in~(\ref{OptD})~\cite{36Zhang},~\cite{add3}.  Given the limited transmit power, the value of the OF  should be monotonically non-decreasing sequence with an upper bound, which converges to a stationary solution that is at least locally optimal. We summarize the above iterative scheme in Algorithm~\ref{algorithm2}.

Now, we study the rank relaxation problem. To analyze the rank characteristic of the solution, we first define the Lagrangian function of the relaxed version of (\ref{OptD}) that can be expressed as ~(\ref{lag}) at the bottom of this page. Then, we have the following~theorems.

 \setcounter{equation}{40}
 \begin{theorem}\label{theorem1}
 When $\psi_2^\ast>0$, let ${\bf{V}}_0^\ast$ denote an optimal solution of~(\ref{OptD}), it must hold that
 \begin{eqnarray}
 	{\rm{rank}}({{\bf{V}}}_0^\ast)\leq L,
 \end{eqnarray}
 where $\psi_2^\ast$ represents the optimal Lagrange multiplier for the dual problem of~(\ref{OptD}), while there always exists a ${\bf{V}}_0^\ast$ so that 
  \begin{eqnarray}
 {\rm{rank}}({{\bf{V}}}_0^\ast)\leq \sqrt{L}.
 \end{eqnarray}
 
 Moreover, if there exists any $l$ so that  $\|{\bf{g}}_{l}\|<\|{\bf{g}}_{l'}\|(l'\in\{1,\dots,l-1,l+1,\dots,L\})$, we have 
   \begin{eqnarray}
 {\rm{rank}}({{\bf{V}}}_0^\ast)=1.
 \end{eqnarray}
 \end{theorem}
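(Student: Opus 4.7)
My plan is to analyze the Karush--Kuhn--Tucker (KKT) conditions of the SDP-relaxed version of (\ref{OptD}) (i.e., problem (\ref{OptD}) with the rank-one constraint (\ref{OptB5}) dropped), using the Lagrangian in (\ref{lag}). The entire argument hinges on the stationarity condition with respect to $\mathbf{V}_0$ and the complementary slackness $\mathbf{\Omega}_0^\ast \mathbf{V}_0^\ast = \mathbf{0}$, coupled with a rank-count inequality $\mathrm{rank}(\mathbf{V}_0^\ast) \leq N - \mathrm{rank}(\mathbf{\Omega}_0^\ast)$.

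First, I would isolate all terms in (\ref{lag}) that depend on $\mathbf{V}_0$, namely $-\psi_2\,\mathrm{Tr}(\mathbf{V}_0) + \sum_{l=1}^{L}\psi_9^{l}\,\mathrm{Tr}(\mathbf{G}_l\mathbf{V}_0) + \mathrm{Tr}(\mathbf{\Omega}_0\mathbf{V}_0)$, and take the matrix derivative to obtain the stationarity condition
\begin{equation}\label{eq:stat}
\mathbf{\Omega}_0^\ast \;=\; \psi_2^\ast \mathbf{I}_N \;-\; \sum_{l=1}^{L}\psi_9^{l\ast}\,\mathbf{G}_l,
\end{equation}
where each $\mathbf{G}_l = \mathbf{g}_l^H\mathbf{g}_l$ is rank-one. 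For Part (i), since $\psi_2^\ast > 0$ the matrix $\psi_2^\ast\mathbf{I}_N$ has full rank $N$, and subtracting the rank-$L$ perturbation $\sum_{l}\psi_9^{l\ast}\mathbf{G}_l$ cannot reduce the rank of $\mathbf{\Omega}_0^\ast$ by more than $L$, hence $\mathrm{rank}(\mathbf{\Omega}_0^\ast) \geq N - L$. Complementary slackness then forces $\mathbf{V}_0^\ast$ into the null space of $\mathbf{\Omega}_0^\ast$, yielding $\mathrm{rank}(\mathbf{V}_0^\ast) \leq L$.

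For Part (ii), I would invoke the sharper rank-reduction result for separable homogeneous quadratic SDPs with $L$ linear (trace) constraints (Huang--Palomar / Pataki bound): for the restricted SDP obtained by fixing all primal variables other than $\mathbf{V}_0$ at their optimum, the feasible set is defined by at most $L$ trace inequalities of the form $\mathrm{Tr}(\mathbf{G}_l\mathbf{V}_0)\geq c_l$ (from (\ref{OptC2})) plus the power cap (\ref{OptB4}). Standard arguments then guarantee an extreme-point optimizer $\mathbf{V}_0^\ast$ whose rank $r$ satisfies $r^2 \leq L$, i.e., $\mathrm{rank}(\mathbf{V}_0^\ast) \leq \sqrt{L}$; one obtains this explicit $\mathbf{V}_0^\ast$ by the usual rank-reduction procedure that repeatedly exploits nontrivial null vectors of the Gram map until no further reduction is possible. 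I expect this to be the most technically delicate step because it requires appealing to a structural bound rather than a routine KKT manipulation.

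For Part (iii), the hypothesis $\|\mathbf{g}_{l}\| < \|\mathbf{g}_{l'}\|$ for all $l' \neq l$ means that BS $l$ uniquely achieves the worst fronthaul channel quality. I would argue that only the single constraint in (\ref{OptC2}) associated with this $l$ can be active at the optimum, so complementary slackness gives $\psi_9^{l'\ast} = 0$ for all $l' \neq l$. Substituting into (\ref{eq:stat}) leaves $\mathbf{\Omega}_0^\ast = \psi_2^\ast\mathbf{I}_N - \psi_9^{l\ast}\mathbf{g}_l^H\mathbf{g}_l$, a full-rank identity perturbed by a single rank-one term. Hence $\mathrm{rank}(\mathbf{\Omega}_0^\ast) \geq N-1$, so $\mathrm{rank}(\mathbf{V}_0^\ast) \leq 1$; combined with $\mathbf{V}_0^\ast \neq \mathbf{0}$ (otherwise no fronthaul delivery is possible and the problem is infeasible), we conclude $\mathrm{rank}(\mathbf{V}_0^\ast) = 1$. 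The chief obstacle I anticipate is formally justifying that only one fronthaul constraint is active under the stated norm condition—this requires a short contradiction argument showing that a beamformer violating this property could be strictly improved by reallocating power toward the dominant-$\mathbf{g}_l$ subspace of the worst BS.
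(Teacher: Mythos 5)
Your proposal is correct and follows essentially the same route as the paper's Appendix A: part (i) via the stationarity condition $\mathbf{\Omega}_0^\ast=\psi_2^\ast\mathbf{I}-\sum_l\psi_9^{l\ast}\mathbf{G}_l$ plus complementary slackness, part (ii) via the Huang--Palomar rank-reduction procedure on the $L$ active fronthaul trace constraints (which the paper executes explicitly rather than citing), and part (iii) via the observation that only the worst BS's constraint is active so that $\psi_9^{l'\ast}=0$ for $l'\neq l$. Your closing remark that the single-active-constraint claim in part (iii) deserves a formal justification is well taken, since the paper itself asserts it without proof.
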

\begin{proof}
Refer to Appendix~\ref{appendixA}.
\end{proof}
\begin{theorem}\label{theorem2}
	When $\psi_1^\ast>0, \psi_3^{i\ast}-\psi_8^{i\ast}\geq 0\;(i\neq k), \psi_5^{{z,k}\ast}-\sum_{i\neq k}^{K}\psi_6^{z,i\ast}\geq 0$, we have 
  \begin{eqnarray}
{\rm{rank}}({{\bf{V}}}_k^\ast)=1, k\in{\mathcal{K}},
\end{eqnarray}
where $\psi_1^\ast, \psi_3^{i\ast}, \psi_8^{i\ast}, \psi_5^{z,k\ast}, \psi_6^{z,i\ast}$ denote the optimal Lagrange multipliers for the dual problem of~(\ref{OptD}).
\end{theorem}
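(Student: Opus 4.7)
The plan is to establish the rank-one property via a KKT analysis of the SDP-relaxed version of (\ref{OptD}). Since the relaxed problem is convex with affine or LMI constraints, and a strictly feasible point can easily be exhibited, Slater's condition holds so strong duality applies. Any primal optimum $\{{\bf{V}}_k^\ast\}$ must therefore satisfy the KKT system, in particular matrix stationarity $\partial\mathcal{F}({\bf{\Pi}}_1)/\partial{\bf{V}}_k={\bf{0}}$, dual feasibility ${\bf{\Omega}}_k^\ast\succeq{\bf{0}}$, and complementary slackness ${\bf{\Omega}}_k^\ast{\bf{V}}_k^\ast={\bf{0}}$.

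First, I would collect every term of $\mathcal{F}({\bf{\Pi}}_1)$ in (\ref{lag}) that depends on ${\bf{V}}_k$ and differentiate. The power budget (\ref{OptB3}) contributes $-\psi_1^\ast{\bf{I}}$; the constraint attached to $\psi_3^i$ (resp.\ $\psi_8^i$) for $i\neq k$ picks up ${\bf{V}}_k$ in its cross-sum $\sum_{j\neq i}{\rm Tr}(\overline{{\bf{H}}}_i{\bf{V}}_j)$, giving $-\psi_3^{i\ast}\overline{{\bf{H}}}_i$ (resp.\ $+\psi_8^{i\ast}\overline{{\bf{H}}}_i$); the $\psi_4^k$ and $\psi_7^k$ constraints contribute $+\psi_4^{k\ast}\overline{{\bf{H}}}_k$ and $-\psi_7^{k\ast}\overline{{\bf{H}}}_k$; and, for each eavesdropper $z$, $\psi_5^{z,k}$ contributes $-\psi_5^{z,k\ast}\overline{{\bf{H}}^e}_z$ while $\psi_6^{z,i}$ for $i\neq k$ contributes $+\psi_6^{z,i\ast}\overline{{\bf{H}}^e}_z$. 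Equating the sum of all these contributions plus ${\bf{\Omega}}_k^\ast$ with ${\bf{0}}$ yields
\begin{equation}
{\bf{\Omega}}_k^\ast={\bf{A}}_k-(\psi_4^{k\ast}-\psi_7^{k\ast})\overline{{\bf{H}}}_k,
\end{equation}
where
\begin{equation}
{\bf{A}}_k=\psi_1^\ast{\bf{I}}+\sum_{i\neq k}(\psi_3^{i\ast}-\psi_8^{i\ast})\overline{{\bf{H}}}_i+\sum_{z=1}^{Z}\Bigl(\psi_5^{z,k\ast}-\sum_{i\neq k}\psi_6^{z,i\ast}\Bigr)\overline{{\bf{H}}^e}_z.
\end{equation}

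Next, I would invoke the three hypotheses. Since $\psi_1^\ast>0$ forces $\psi_1^\ast{\bf{I}}\succ{\bf{0}}$, and since each $\overline{{\bf{H}}}_i,\overline{{\bf{H}}^e}_z$ is positive semidefinite of rank one, the assumed sign conditions $\psi_3^{i\ast}-\psi_8^{i\ast}\geq 0$ and $\psi_5^{z,k\ast}-\sum_{i\neq k}\psi_6^{z,i\ast}\geq 0$ make every remaining summand PSD. Therefore ${\bf{A}}_k\succ{\bf{0}}$ and ${\rm rank}({\bf{A}}_k)=L$. Because $(\psi_4^{k\ast}-\psi_7^{k\ast})\overline{{\bf{H}}}_k$ has rank at most one, the standard inequality ${\rm rank}({\bf{X}}+{\bf{Y}})\geq{\rm rank}({\bf{X}})-{\rm rank}({\bf{Y}})$ gives ${\rm rank}({\bf{\Omega}}_k^\ast)\geq L-1$. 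The complementary slackness ${\bf{\Omega}}_k^\ast{\bf{V}}_k^\ast={\bf{0}}$ then confines ${\bf{V}}_k^\ast$ to the null space of ${\bf{\Omega}}_k^\ast$, so ${\rm rank}({\bf{V}}_k^\ast)\leq L-{\rm rank}({\bf{\Omega}}_k^\ast)\leq 1$. The trivial case ${\bf{V}}_k^\ast={\bf{0}}$ is excluded because it would drive $\beta_k^\ast$ and hence user $k$'s contribution to the OF to zero, contradicting optimality, so ${\rm rank}({\bf{V}}_k^\ast)=1$.

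The principal obstacle is the stationarity bookkeeping. Constraints (\ref{C12}), (\ref{C22}), (\ref{C31}), (\ref{C61}) and the budget (\ref{OptB3}) all involve ${\bf{V}}_k$---some directly and several only through cross-sums indexed by another user---so the derivative is assembled from six different Lagrange multipliers with alternating signs. Verifying that these contributions recombine exactly into the differences $\psi_3^{i\ast}-\psi_8^{i\ast}$ and $\psi_5^{z,k\ast}-\sum_{i\neq k}\psi_6^{z,i\ast}$ appearing in the statement, and that no stray multiple of $\overline{{\bf{H}}}_k$ slips into ${\bf{A}}_k$ and spoils its positive definiteness, is the delicate step on which the whole argument pivots.
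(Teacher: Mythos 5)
Your proposal is correct and follows essentially the same route as the paper's Appendix B: stationarity of the Lagrangian in ${\bf{V}}_k$, positive definiteness of the bulk matrix (the paper's ${\bf{Y}}_k$, your ${\bf{A}}_k$ plus the $\psi_7^{k\ast}\overline{{\bf{H}}}_k$ term) under the stated multiplier conditions, the rank perturbation bound ${\rm rank}({\bf{\Omega}}_k^\ast)\geq L-1$ from subtracting a rank-one multiple of $\overline{{\bf{H}}}_k$, and complementary slackness plus exclusion of the trivial ${\bf{V}}_k^\ast={\bf{0}}$ case. The only cosmetic difference is that you fold $\psi_7^{k\ast}\overline{{\bf{H}}}_k$ into the subtracted rank-one term rather than into the positive-definite part, which changes nothing in the argument.
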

\begin{proof}
	Refer to Appendix~\ref{appendixB}.
\end{proof}

It should be stressed that Theorems~\ref{theorem1} and~\ref{theorem2} provide only the sufficient conditions for the optimality of the BF. During simulation experience, we found that problem~(\ref{OptD}) still yields the rank-one solutions even though the above conditions do not hold. However, we still have to consider the following problem: If $\{{{\bf{V}}}_0^\ast,\{{{\bf{V}}}_k^\ast\}, {\bf{\Lambda}}^\ast\}$ denotes the optimal solution of problem~(\ref{OptD}) and there exists ${\rm{rank}}({{\bf{V}}}_k^\ast)>1$ or  ${\rm{rank}}({{\bf{V}}}_0^\ast)>1$, how can the rank-one solution be obtained? To solve this problem, we consider the following two cases:
\begin{itemize}
	\item When ${\rm{rank}}({{\bf{V}}}_k^\ast)>1$, we can construct a feasible  solution $\{{\hat{\bf{V}}}_0^\ast,\{{\hat{\bf{V}}}_k^\ast\}, \hat{\bf{\Lambda}}^\ast\}$ for ${\rm{rank}}({{\bf{V}}}_k^\ast)=1$ such that it can achieve the same objective value with $\{{{\bf{V}}}_0^\ast,\{{{\bf{V}}}_k^\ast\}, {\bf{\Lambda}}^\ast\}$. Refer to Appendix~\ref{appendixC} for details.
	\item When ${\rm{rank}}({{\bf{V}}}_0^\ast)>1$, we apply a randomization technique to obtain a rank-one ${{\bf{V}}}_0^\ast$. The details can be found in Appendix~\ref{appendixD}.
	
\end{itemize}

\begin{algorithm}[t]
	{\caption{The Proposed Iterative Algorithm.}
		\label{algorithm2}
		{\bf{Initialize}} $\{\beta_k^{[n]}\}$, $\{\varepsilon_k^{[n]}\}$, $\{\mu_k^{z[n]}\}$, $\{\gamma_k^{z[n]}\}$, $\{\tau_k^{[n]}\}$, $\{\lambda_k^{[n]}\}$, $n=0$, the maximum iteration index $T_{\rm{max}}$.\\
		\Repeat{$n=T_{\rm{max}}$ {\rm{or} Convergence}}{
			Update $n\leftarrow n+1$.\\
			Solve the optimization problem (\ref{OptD}) without the rank-one constraint and obtain the optimal solution ${\bf{V}}_0^{[n]},\{\varepsilon_k^{[n]}\},\{\lambda_k^{[n]}\},\{\theta_k^{[n]}\},\{\zeta_k^{z[n]}\},\{\mu_k^{z[n]}\},\{\tau_k^{[n]}\},\omega^{[n]},\{{\bf{V}}_k^{[n]}\},{\bf{\Lambda}}^{[n]}$,\\$\{\beta_k^{[n]}\},\{\gamma_k^{z[n]}\}$.}}
\end{algorithm} 

\subsection{Problem Formulation and Solution Under Per-BS Transmit Power Constraint}
In the previous subsection, the total  transmit power constraint of $L$ BSs is considered. Although the power can be allocated more flexibly to the BSs under total transmit power constraint,  each BS is usually prohibited to transmit at high power due to practicality concerns. Therefore, considering per-BS transmit power constraint may be more practical.  To this end, we define ${\bf{B}}$ as
	\begin{eqnarray}\label{powerconstraint}
	{\bf{B}}_l\triangleq {\rm{Diag}}(\underbrace{0,\ldots,0}_{(l-1)},1,\underbrace{0,\ldots,0}_{(L-l)}), l\in{\mathcal{L}},
	\end{eqnarray}
and per-BS power constraint (\ref{OptB3}) can be written~as
\begin{eqnarray}\label{Per-BS}
	\sum_{k=1}^{K}{\rm{Tr}}({\bf{V}}_k{\bf{B}}_l)+{\rm{Tr}}({\bf{\Lambda}}{\bf{B}}_l)\leq P_{{\rm{max}},l}^{{\rm{BS}}},l\in{\mathcal{L}},
\end{eqnarray}
where $P_{{\rm{max}},l}^{{\rm{BS}}}$ denotes the maximum transmit power of the $l$th BS. Then, we formulate the following SRM  problem under per-BS transmit power constraint as follows:
\begin{subequations}\label{OptAA}
	\begin{align}
	\;\;\;\;\;\;\;\;\;\;\;\;\;\;\;&\underset{\left\{{\bf{V}}_0,\{{\bf{V}}_k\},{\bf{\Lambda}}\right\}}{\rm{max}}\;\sum_{k=1}^{K}\left[R_k^{\rm{AC}}-\underset{z\in{\mathcal{Z}}}{\max}\left\{R_{k,z}^{\rm{EV}}\right\}\right]^+ \label{OptAA0}\\
	{\rm{s.t.}}&\;\; {\rm(\ref{OptA1})-(\ref{OptA3}),(\ref{Per-BS})}.
	\end{align}
\end{subequations} 

Since (\ref{Per-BS}) is a convex constraint, we can adopt the scheme proposed in Section III.~B to deal with problem (\ref{OptAA}), and finally formulate the following convex optimization problem without a rank-one constraint
\begin{subequations}\label{OptE}
	\begin{align}
&\underset{\left\{{\bf{V}}_0,\{\varepsilon_k\},\{\lambda_k\},\{\theta_k\},\{\zeta_k^z\},\{\mu_k^z\},\{\tau_k\},\omega,\{{\bf{V}}_k\},{\bf{\Lambda}},\{\beta_k\},\{\gamma_k^z\}\right\}}{\rm{max}}\;\sum_{k=1}^{K}W_{\rm{mm}}f\left(\beta_k,\gamma_k^{z},\gamma_k^{z[n]}\right) \label{OptE0}\\
	&{\rm{s.t.}}\;\;{\rm(\ref{OptB4}), (\ref{OptB6}), (\ref{C12}), (\ref{C22}), (\ref{C31}), (\ref{C41}), (\ref{OptC2}),(\ref{C51})},{\rm(\ref{C61})}, \nonumber\\
	&\;\;\;\;\;\; {(\ref{C71}), (\ref{C81}), (\ref{Per-BS})}.
	\end{align}
\end{subequations}

As a result, we can iteratively solve the above convex optimization problem to obtain the solution of the original problem~(\ref{OptAA}) using a process similar  to~Algorithm~\ref{algorithm2}.  Additionally, we can obtain the similar theorems with Theorems~\ref{theorem1} and~\ref{algorithm2}, and the reconstruction of rank-one solution can refer to Appendices~\ref{appendixA}-\ref{appendixD}.
\subsection{Complexity Analysis}
We first analyze the complexity of solving~(\ref{OptD}). Given an iteration accuracy $\epsilon$, the number of iterations is on the order $\sqrt{\Delta}\ln(1/\epsilon)$, where $\Delta=4KZ\!+\!KL\!+\!8K\!+\!N\!+\!1$ denotes the threshold parameters related to the constraints~\cite{43Com}. Additionally, (\ref{OptD}) includes $(2KZ+4K+1)$ linear constraints, $(KZ+K)$ two-dimensional LMI constraints, one $N$-dimensional  LMI constraint, $K$ $L$-dimensional  LMI  constraints and $K$ second order cone constraints. Therefore, the complexity of solving~(\ref{OptD}) is on the order of 
\begin{eqnarray}
	{\mathcal{O}}\left(\xi\sqrt{\Delta}\ln(1/\epsilon)(\xi_1+\xi_2\xi+\xi_3+\xi^2)\right),
\end{eqnarray} 
where $\xi={\mathcal{O}}(N^2+KL^2)$ and $N^2+KL^2$ denotes the number of decision variables,  $\xi_1=KL^3+N^3+10KZ+12K+1$, $\xi_2=KL^2+N^2+6KZ+8K+1$, $\xi_3=K(L+2)^2$. Additionally, one can observe that the only difference  between problems~(\ref{OptE}) and~(\ref{OptD}) is (\ref{Per-BS}). Since (\ref{Per-BS}) is also a linear constraint, the complexity of solving~(\ref{OptE}) is on the order of 
\begin{eqnarray}
{\mathcal{O}}\left(\xi\sqrt{\Delta}\ln(1/\epsilon)(\hat{\xi}_1+\hat{\xi}_2\xi+\xi_3+\xi^2)\right),
\end{eqnarray} 
where $\hat{\xi}_1\!=\!KL^3\!+\!N^3\!+\!10KZ\!+\!12K\!+\!L$, $\hat{\xi}_2\!=\!KL^2\!+\!N^2\!+\!6KZ\!+\!8K\!+\!L$.
\section{Extension To The Imperfect CSI for Eavesdropper Links}
In the previous section, we consider the SRM problem under the assumption that the Eves' CSIs are perfectly known by the BSs. In fact, since the Eves are usually passive, the BSs cannot obtain their perfect CSIs~\cite{37Imcsi},~\cite{38Imcsi}. Therefore,  we assume that the BSs own imperfect CSIs for the Eves links,~i.e.,
\begin{eqnarray}
	{\bf{h}}_z^{\rm{e}}={\hat{\bf{h}}}_z^{\rm{e}}+\bigtriangleup{\bf{h}}_z^{\rm{e}},\;\;z\in{\mathcal{Z}},
\end{eqnarray}
where ${\bf{h}}_z^{\rm{e}}$ is the actual channel vector from $L$ BSs to the $z$th Eve, ${\hat{\bf{h}}}_z^{\rm{e}}$ denotes the estimated ${{\bf{h}}}_z^{\rm{e}}$, and $\bigtriangleup{\bf{h}}_z^{\rm{e}}$ represents the associated CSI error. Here, we assume that the estimation error $\bigtriangleup{\bf{h}}_z^{\rm{e}}$ is bounded and deterministic, which is defined as
\begin{eqnarray}
	||\bigtriangleup{\bf{h}}_z^{\rm{e}}||^2\leq \sigma_z ||\hat{\bf{h}}_z^{\rm{e}}||^2, z\in{\mathcal{Z}},
\end{eqnarray}
where $\sigma_z\geq 0$ is the error ratio. Hence, the received signal of the $k$th user at the $z$th Eve can be rewritten as
\begin{eqnarray}
\begin{aligned}
y_z^{\rm{EV}}=({\hat{\bf{h}}}_z^{\rm{e}}+\bigtriangleup{\bf{h}}_z^{\rm{e}}){\bf{F}}\left(\sum\nolimits_{i=1}^K{\bf{v}}_ix_i+{\bf{q}}\right)+n_z.
\end{aligned}
\end{eqnarray}
{The $z$th Eve has an achievable rate at the $k$th user represented as}
\begin{eqnarray}
\hat{R}_{k,z}^{\rm{EV}}\!=\!W_{\rm{mm}}\log\left(1\!+\!\frac{|({\hat{\bf{h}}}_z^{\rm{e}}+\bigtriangleup{\bf{h}}_z^{\rm{e}}){\bf{F}}{\bf{v}}_k|^2}{\Sigma_{k,z}+\!W_{\rm{mm}}N_0}\right).
\end{eqnarray} 
where $\Sigma_{k,z}=\sum\nolimits_{i\neq k}^K|({\hat{\bf{h}}}_z^{\rm{e}}\!+\!\bigtriangleup{\bf{h}}_z^{\rm{e}}){\bf{F}}{\bf{v}}_i|^2\!+\!|({\hat{\bf{h}}}_z^{\rm{e}}\!+\!\bigtriangleup{\bf{h}}_z^{\rm{e}}){\bf{F}}{\bf{q}}|^2$.
Then, we can formulate the SRM problem as follows
\begin{subequations}\label{OptF}
	\begin{align}
	\;\;\;\;\;\;\;\;\;\;\;\;&\underset{\left\{{\bf{v}}_0,\{{\bf{v}}_k\},{\bf{\Lambda}}\right\}}{\rm{max}}\;\sum_{k=1}^{K}\left[R_k^{\rm{AC}}-\underset{z\in{\mathcal{Z}}}{\max}\left\{\hat{R}_{k,z}^{\rm{EV}}\right\}\right]^+ \label{OptF0}\\
	{\rm{s.t.}}&\;\; {\bigtriangleup{\bf{h}}_z^{\rm{e}}}({\bigtriangleup{{\bf{h}}}_z^{\rm{e}}})^H\leq \sigma_z \left({\hat{\bf{h}}_z^{\rm{e}}}({\hat{\bf{h}}_z^{\rm{e}}})^H\right),z\in{\mathcal{Z}},\label{OptF1}\\
	&\;\;{\rm{(\ref{OptA1})-(\ref{OptA3})}}.
	\end{align}
\end{subequations}

Observe that problem (\ref{OptF}) is similar to (\ref{OptA}) except for the wiretap rate $\hat{R}_{k,z}^{\rm{EV}}$ and constraint (\ref{OptF1}). Therefore, we first deal with these two terms, while the other terms can be transformed into convex ones using the similar methods to that of Section III.~B. By introducing auxiliary variable $\{\hat{\gamma}_k^z\}$, we reformulate the optimization problem of (\ref{OptF}) as
 \begin{subequations}\label{OptG}
 	\begin{align}
 		&\underset{\left\{{\bf{V}}_0,\{{\bf{V}}_k\},{\bf{\Lambda}},\{{\hat{\gamma}}_k^z\},\{{\beta}_k\}\right\}}{\rm{max}}\;\sum_{k=1}^{K}W_{\rm{mm}}\left(\log(1+\beta_k)-\log(1+\hat{\gamma}_k^z)\right) \label{OptG0}\\
 	{\rm{s.t.}}&\;\;\hat{\gamma}_k^z \!\geq\! \frac{({\hat{\bf{h}}}_z^{\rm{e}}+\bigtriangleup{\bf{h}}_z^{\rm{e}}){\bf{F}}{\bf{V}}_k{\bf{F}}^H\left({\hat{\bf{h}}}_z^{\rm{e}}+\bigtriangleup{\bf{h}}_z^{\rm{e}}\right)^H}{\Xi_k^z}, k\!\in\! \mathcal{K},z\!\in\! \mathcal{Z},\label{OptG1}\\
 	&\;\;{\rm{(\ref{OptA1}), (\ref{OptB1}), (\ref{OptB3})-(\ref{OptB6}), (\ref{OptF0})}},
 	\end{align}
 \end{subequations} 
where $\Xi_k^z=\sum\nolimits_{i\neq k}^K({\hat{\bf{h}}}_z^{\rm{e}}+\bigtriangleup{\bf{h}}_z^{\rm{e}}){\bf{F}}{\bf{V}}_i{\bf{F}}^H\left({\hat{\bf{h}}}_z^{\rm{e}}+\bigtriangleup{\bf{h}}_z^{\rm{e}}\right)^H\!+\!({\hat{\bf{h}}}_z^{\rm{e}}+\bigtriangleup{\bf{h}}_z^{\rm{e}}){\bf{F}}{\bf{\Lambda}}{\bf{F}}^H\left({\hat{\bf{h}}}_z^{\rm{e}}+\bigtriangleup{\bf{h}}_z^{\rm{e}}\right)^H\!+\!W_{\rm{mm}}N_0$.

Next, we introduce auxiliary variables $\{\hat{\zeta}_k^z\}$, $\{\hat{\mu}_k^z\}$, $\{\chi_k^z\}$ and split the constraint (\ref{OptG1}) into the following~ones
\begin{subequations}
	\begin{align}
		\;\;\;\;\;\;&({\hat{\bf{h}}}_z^{\rm{e}}+\bigtriangleup{\bf{h}}_z^{\rm{e}}){\bf{F}}{\bf{V}}_k{\bf{F}}^H\left({\hat{\bf{h}}}_z^{\rm{e}}+\bigtriangleup{\bf{h}}_z^{\rm{e}}\right)^H\leq \hat{\zeta}_k^z, k\!\in\! \mathcal{K},z\!\in\! \mathcal{Z}, \label{C91} \\
	&\hat{\zeta}_k^z  \leq (\hat{\mu}_k^z)^2, k\!\in\! \mathcal{K},z\!\in\! \mathcal{Z},\label{C92}\\
	&(\hat{\mu}_k^z)^2\leq \hat{\gamma}_k^z\chi_k^z, k\!\in\! \mathcal{K},z\!\in\! \mathcal{Z},\label{C93}\\
	&\chi_k^z\leq \Xi_k^z, k\!\in\! \mathcal{K},z\!\in\! \mathcal{Z}.\label{C94}
	\end{align}
\end{subequations}

It is observed that (\ref{C92}) can be transformed into the following convex constraint
\begin{eqnarray}\label{C101}
\hat{\zeta}_k^z\leq 2\hat{\mu}_k^{z[n]}\hat{\mu}_k^z-\left(\hat{\mu}_k^{z[n]}\right)^2,k\!\in\! \mathcal{K},z\!\in\! \mathcal{Z},
\end{eqnarray}
where $\hat{\mu}_k^{z[n]}$ denotes the value of $\hat{\mu}_k^z$ at the $n$th iteration. Additionally, (\ref{C93}) can be reformulated into the following LMI constraint 
\begin{eqnarray}\label{C111}
\left[ \begin{array}{ccc}
\hat{\gamma}_k^z & \hat{\mu}_k^z \\
\hat{\mu}_k^z & \chi_k^z
\end{array} 
\right ]\succeq {\bf{0}},\;\;k\!\in\! \mathcal{K},z\!\in\! \mathcal{Z}.
\end{eqnarray}  

\begin{figure*}[b]
	\hrulefill
	\setcounter{mytempeqncnt}{\value{equation}}
	\setcounter{equation}{63}
	\setlength{\mathindent}{0cm}
	\begin{eqnarray}\label{C15}
	\begin{aligned}
	\Xi_k^z=&\sum\nolimits_{i\neq k}^K({\hat{\bf{h}}}_z^{\rm{e}}+\bigtriangleup{\bf{h}}_z^{\rm{e}}){\bf{F}}{\bf{V}}_i{\bf{F}}^H\left({\hat{\bf{h}}}_z^{\rm{e}}+\bigtriangleup{\bf{h}}_z^{\rm{e}}\right)^H\!+\!({\hat{\bf{h}}}_z^{\rm{e}}+\bigtriangleup{\bf{h}}_z^{\rm{e}}){\bf{F}}{\bf{\Lambda}}{\bf{F}}^H\left({\hat{\bf{h}}}_z^{\rm{e}}+\bigtriangleup{\bf{h}}_z^{\rm{e}}\right)^H\!+\!W_{\rm{mm}}N_0\\
	=&\bigtriangleup{\bf{h}}_z^{\rm{e}}\left(\sum\nolimits_{i\neq k}^K{\bf{F}}{\bf{V}}_i{\bf{F}}^H\!+\!{\bf{F}}{\bf{\Lambda}}{\bf{F}}^H\right)(\bigtriangleup{\bf{h}}_z^{\rm{e}})^H\!+\!2{\rm{Re}}
	\left\{{\hat{\bf{h}}}_z^{\rm{e}}\left(\sum\nolimits_{i\neq k}^K{\bf{F}}{\bf{V}}_i{\bf{F}}^H\!+\!{\bf{F}}{\bf{\Lambda}}{\bf{F}}^H\right)(\bigtriangleup{\bf{h}}_z^{\rm{e}})^H\right\}\!+\!{\hat{\bf{h}}}_z^{\rm{e}}\left(\sum\nolimits_{i\neq k}^K{\bf{F}}{\bf{V}}_i{\bf{F}}^H\!+\!{\bf{F}}{\bf{\Lambda}}{\bf{F}}^H\right)(\hat{\bf{h}}_z^{\rm{e}})^H\!+\!W_{\rm{mm}}N_0.
	\end{aligned}
	\end{eqnarray}

	\begin{eqnarray}\label{C16}
	\left[ \begin{array}{ccc}
	\upsilon_z{\bf{I}}\!+\!\left(\sum\nolimits_{i\neq k}^K{\bf{F}}{\bf{V}}_i{\bf{F}}^H\!+\!{\bf{F}}{\bf{\Lambda}}{\bf{F}}^H\right) & \left({\hat{\bf{h}}}_z^{\rm{e}}\left(\sum\nolimits_{i\neq k}^K{\bf{F}}{\bf{V}}_i{\bf{F}}^H\!+\!{\bf{F}}{\bf{\Lambda}}{\bf{F}}^H\right)\right)^H \\
	{\hat{\bf{h}}}_z^{\rm{e}}\left(\sum\nolimits_{i\neq k}^K{\bf{F}}{\bf{V}}_i{\bf{F}}^H\!+\!{\bf{F}}{\bf{\Lambda}}{\bf{F}}^H\right) &-\upsilon_z\sigma_z {\hat{\bf{h}}_z^{\rm{e}}}({\hat{\bf{h}}_z^{\rm{e}}})^H\!+\!{\hat{\bf{h}}}_z^{\rm{e}}\left(\sum\nolimits_{i\neq k}^K{\bf{F}}{\bf{V}}_i{\bf{F}}^H\!+\!{\bf{F}}{\bf{\Lambda}}{\bf{F}}^H\right)(\hat{\bf{h}}_z^{\rm{e}})^H\!+\!W_{\rm{mm}}N_0-{\chi_z^k}
	\end{array} 
	\right ]\succeq {\bf{0}}.
	\end{eqnarray}  
	
	\setcounter{equation}{66}
	
	\begin{align}\label{lagA}
	\mathcal{F}({\bf{\Pi}}_2)=&\sum_{k=1}^{K}W_{\rm{mm}}\hat{f}\left(\beta_k,\hat{\gamma}_k^{z},\hat{\gamma}_k^{z[n]}\right)\!+\psi_1\left(P_{\rm{max}}^{\rm{BS}}-\sum_{k=1}^{K}{\rm{Tr}}({\bf{V}}_k)\!-\!{\rm{Tr}}({\bf{\Lambda}})\right)\!+\!\psi_2\left(P_{\rm{max}}^{\rm{AC}}\!-\!{\rm{Tr}}({\bf{V}}_0)\right)\!+\!\sum_{k=1}^{K}\psi_3^k\left(\varepsilon_k\!-\!\sum\nolimits_{i\neq k}^K{\rm{Tr}}(\overline{{\bf{H}}}_k{\bf{V}}_i)\!-\!{\rm{Tr}}(\overline{{\bf{H}}}_k{\bf{\Lambda}})\!-\!W_{\rm{mm}}N_0\right)\nonumber\\
	+&\sum_{k=1}^{K}\psi_4^k\left( {\rm{Tr}}(\overline{{\bf{H}}}_k{\bf{V}}_k)\!-\!\frac{\beta_k^{[n]}}{2\varepsilon_k^{[n]}}\varepsilon_k^2\!-\!\frac{\varepsilon_k^{[n]}}{2\beta_k^{[n]}}\beta_k^2\right)+\sum_{k=1}^{K}\psi_7^{k}\left(\theta_k+\tau_kW_{\rm{mm}}N_0-{\rm{Tr}}(\overline{{\bf{H}}}_k{\bf{V}}_k)\right)+\sum_{k=1}^{K}\psi_8^{k}\left(\sum\nolimits_{i\neq k}^K{\rm{Tr}}(\overline{{\bf{H}}}_k{\bf{V}}_i)\!+\!{\rm{Tr}}(\overline{{\bf{H}}}_k{\bf{\Lambda}})\right)\nonumber\\
	+&\sum_{l=1}^{L}\psi_9^{l}\left({\rm{Tr}}({\bf{G}}_l{\bf{V}}_0)-W_{\rm{mc}}N_0(e^{\omega/\eta}-1)\right)+\sum_{z=1}^{Z}\sum_{k=1}^{K}\left({\rm{Tr}}({\bf{T}}^1_{z,k}{\bf{S}}^1_{z,k})-{\rm{Tr}}({\bf{T}}^1_{z,k}{{\hat{\bf{H}}}}_{z}{\bf{F}}{\bf{V}}_k{\bf{F}}^H{{\hat{\bf{H}}}}_{z}^H)\right)\nonumber\\
	+&\sum_{z=1}^{Z}\sum_{k=1}^{K}\left({\rm{Tr}}({\bf{T}}^2_{z,k}{\bf{S}}^2_{z,k})+{\rm{Tr}}\left({\bf{T}}^2_{z,k}{{\hat{\bf{H}}}}_{z}\left(\sum\nolimits_{i\neq k}^K{\bf{F}}{\bf{V}}_i{\bf{F}}^H\!+\!{\bf{F}}{\bf{\Lambda}}{\bf{F}}^H\right){{\hat{\bf{H}}}}_{z}^H\right)\right)+\sum_{k=0}^{K}{\rm{Tr}}({{\bf{\Omega}}_k\bf{V}}_k)+\psi'.
	\end{align}
	where
	${\bf{\Pi}}_2=\left\{{\bf{V}}_0,\{{\bf{V}}_k\},{\bf{\Lambda}},\{\varepsilon_k\},\{\lambda_k\},\{\theta_k\},\{\zeta_k^z\},\{\mu_k^z\},\{\tau_k\},\omega,\{\beta_k\},\{\gamma_k^z\},\psi_1,\psi_2,\{\psi_3^k\},\{\psi_4^k\},\{\psi_7^k\},\{\psi_8^k\},\{\psi_9^{l}\},\{{\bf{\Omega}}_k\},{\bf{T}}^1_{z,k}\right.$,\\$\left.{\bf{T}}^2_{z,k},\psi' \right\}$ and $\psi'$ denotes other Lagrangian terms that do not affect our analysis, ${\bf{S}}^1_{z,k}=\left[ \begin{array}{ccc}
	\kappa_z{\bf{I}}& {\bf{0}} \\
	{\bf{0}} & {\hat{\zeta}_z^k}\!-\!\kappa_z\sigma_z {\hat{\bf{h}}_z^{\rm{e}}}({\hat{\bf{h}}_z^{\rm{e}}})^H
	\end{array} 
	\right ]$, ${{\hat{\bf{H}}}}_{z}=\left[ \begin{array}{ccc}
	{\bf{I}} \\{\hat{\bf{h}}_z^{\rm{e}}}
	\end{array} \right ]$, ${\bf{S}}^2_{z,k}=\left[ \begin{array}{ccc}
	\upsilon_z{\bf{I}} & {\bf{0}} \\
	{\bf{0}}  &W_{\rm{mm}}N_0\!+\!{\chi_z^k}-\upsilon_z\sigma_z {\hat{\bf{h}}_z^{\rm{e}}}({\hat{\bf{h}}_z^{\rm{e}}})^H
	\end{array} \right ]$.
	\setcounter{equation}{\value{mytempeqncnt}}
\end{figure*}

Before dealing with~(\ref{C91}) and (\ref{C94}), we formulate the following  Lemma~\ref{lemma1} for the classic $\mathcal{S}$-Procedure~\cite{39Zhou}.
\begin{lemma}\label{lemma1}
	Define the following function
	\begin{eqnarray}
	F_i({\bf{x}})={\bf{x}}{\bf{A}}_i{\bf{x}}^H+2{\rm{Re}}\{{\bf{b}}_i{\bf{x}}^H\}+c_i, i\in\{1,2\},
	\end{eqnarray}
	where ${\bf{x}}\in{\mathbb{C}}^{1\times P}$, ${\bf{A}}_i\in{\mathbb{C}}^{P\times P}$,${\bf{b}}_i\in{\mathbb{C}}^{1\times P}$, $c_i\in\mathbb{R}$ and $P$ represents any integer. Then, the expression $F_1({\bf{x}})\leq 0\Rightarrow F_2({\bf{x}})\leq 0$ holds if and only if there exists a $\kappa$ satisfying
	\begin{eqnarray}\label{C121}
	\kappa\left[ \begin{array}{ccc}
		{\bf{A}}_1 & {\bf{b}}_1^H \\
		{\bf{b}}_1 & c_1
	\end{array} 
	\right ]-\left[ \begin{array}{ccc}
	{\bf{A}}_2 & {\bf{b}}_2^H \\
	{\bf{b}}_2 & c_2
	\end{array} 
	\right ]\succeq {\bf{0}}.
\end{eqnarray}  
\end{lemma}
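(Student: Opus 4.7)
The plan is to prove this classical $\mathcal{S}$-Procedure by treating the two directions separately. The sufficiency (\emph{if}) direction is a direct quadratic-form manipulation, whereas the necessity (\emph{only if}) direction is the substantive content, typically proved via a convex separation argument that invokes a Slater-type regularity condition (the existence of some $\mathbf{x}_0$ with $F_1(\mathbf{x}_0)<0$). I would also need the implicit assumption $\kappa\ge 0$ in (\ref{C121}); without it the sufficiency step breaks, so I would state this explicitly at the outset.

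For sufficiency, the key observation is that each quadratic form $F_i(\mathbf{x})$ can be written as $\mathbf{y}^H \mathbf{M}_i \mathbf{y}$, where $\mathbf{y}=[\mathbf{x}^T,\ 1]^T$ and $\mathbf{M}_i=\bigl[\begin{smallmatrix}\mathbf{A}_i & \mathbf{b}_i^H\\ \mathbf{b}_i & c_i\end{smallmatrix}\bigr]$. Pre- and post-multiplying (\ref{C121}) by $\mathbf{y}^H$ and $\mathbf{y}$ gives the scalar inequality $\kappa F_1(\mathbf{x})-F_2(\mathbf{x})\ge 0$ for every $\mathbf{x}$. Combined with $\kappa\ge 0$ and the hypothesis $F_1(\mathbf{x})\le 0$, this immediately yields $F_2(\mathbf{x})\le \kappa F_1(\mathbf{x})\le 0$, which is the desired implication.

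For necessity, I would argue by contradiction. Assume the implication $F_1\le 0\Rightarrow F_2\le 0$ holds but no $\kappa\ge 0$ satisfies (\ref{C121}). Consider the joint image $\mathcal{S}=\{(F_1(\mathbf{x}),F_2(\mathbf{x})):\mathbf{x}\in\mathbb{C}^P\}\subset \mathbb{R}^2$. The hypothesis means $\mathcal{S}$ is disjoint from the open set $\{(r,s):r\le 0,\ s>0\}$. A separating hyperplane (existence being the main analytic step) produces nonnegative multipliers $(\kappa,\lambda)$ with $\kappa F_1(\mathbf{x})-\lambda F_2(\mathbf{x})\ge 0$ for all $\mathbf{x}$; Slater's condition is then used to rule out $\lambda=0$, and after rescaling one recovers the matrix inequality (\ref{C121}). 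The quadratic-form identity $\mathbf{y}^H(\kappa\mathbf{M}_1-\mathbf{M}_2)\mathbf{y}\ge 0$ for all such augmented $\mathbf{y}$ with last coordinate $1$ extends (by homogeneity in $\mathbf{x}$ together with the $\mathbf{x}=0$ case, which bounds the $(2,2)$ block) to $\mathbf{y}^H(\kappa\mathbf{M}_1-\mathbf{M}_2)\mathbf{y}\ge 0$ for \emph{all} augmented vectors, yielding $\kappa\mathbf{M}_1-\mathbf{M}_2\succeq \mathbf{0}$.

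The main obstacle is the separation step in the necessity direction: a priori $\mathcal{S}$ need not be convex, and one must exploit the special structure of joint ranges of Hermitian quadratic forms. In the complex (Hermitian) setting used here, the relevant joint-image convexity is cleaner than in the real symmetric case and follows from a Toeplitz--Hausdorff-type argument, which is the technical heart of the proof. Since this is a standard result, I would likely short-circuit this obstacle by invoking the complex $\mathcal{S}$-Lemma as stated in the cited reference~\cite{39Zhou} and only spell out the sufficiency calculation and the Slater-condition step in detail.
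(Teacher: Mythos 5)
Your proposal is correct in substance, but it is worth noting that the paper itself offers no proof of this lemma at all: Lemma~\ref{lemma1} is quoted as the ``classic $\mathcal{S}$-Procedure'' and justified solely by the citation to~\cite{39Zhou}. Your sketch therefore goes beyond the paper rather than diverging from it. The sufficiency direction you give (augmenting $\mathbf{x}$ to $\mathbf{y}=[\mathbf{x},1]$, writing $F_i(\mathbf{x})=\mathbf{y}\mathbf{M}_i\mathbf{y}^H$, and sandwiching $F_2\le\kappa F_1\le 0$) is exactly the standard calculation, and your outline of necessity --- separation applied to the joint image of the two Hermitian forms, convexity of that image via a Toeplitz--Hausdorff-type argument, and Slater's condition to exclude the degenerate multiplier --- is the standard route and would work. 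Two of your observations are genuinely valuable and are \emph{not} reflected in the paper's statement: the matrix inequality (\ref{C121}) is only useful with the sign constraint $\kappa\ge 0$, and the ``only if'' direction requires a Slater point $\mathbf{x}_0$ with $F_1(\mathbf{x}_0)<0$; as literally written the equivalence can fail without these hypotheses (in the paper's application they are harmless, since $\kappa_z,\upsilon_z$ are introduced as nonnegative slack variables and the error region ${\bigtriangleup\bf{h}}_z^{\rm{e}}({\bigtriangleup\bf{h}}_z^{\rm{e}})^H\le\sigma_z\hat{\bf{h}}_z^{\rm{e}}(\hat{\bf{h}}_z^{\rm{e}})^H$ has nonempty interior for $\sigma_z>0$). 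Minor quibbles only: the set $\{(r,s):r\le 0,\ s>0\}$ you separate from is not open, so one should separate from its convex hull or work with the homogenized forms as in the standard treatments; and the extension from vectors with last coordinate $1$ to all of $\mathbb{C}^{P+1}$ needs a limiting argument for the $t=0$ slice, which you correctly gesture at.
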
 

Additionally, (\ref{C91}) can be rewritten as
\begin{eqnarray}\label{C13}
\begin{aligned}
&\bigtriangleup{\bf{h}}_z^{\rm{e}}{\bf{F}}{\bf{V}}_k{\bf{F}}^H(\bigtriangleup{\bf{h}}_z^{\rm{e}})^H+2{\rm{Re}}\{{\hat{\bf{h}}}_z^{\rm{e}}{\bf{F}}{\bf{V}}_k{\bf{F}}^H(\bigtriangleup{\bf{h}}_z^{\rm{e}})^H\}\\
&\;\;\;\;\;\;\;\;\;\;\;\;\;+{\hat{\bf{h}}}_z^{\rm{e}}{\bf{F}}{\bf{V}}_k{\bf{F}}^H({\hat{\bf{h}}}_z^{\rm{e}})^H
-{\hat{\zeta}_z^k}\leq 0, k\!\in\! \mathcal{K},z\!\in\! \mathcal{Z}.
\end{aligned}
\end{eqnarray}

According to ${\bigtriangleup{\bf{h}}_z^{\rm{e}}}({\bigtriangleup{\bf{h}}_z^{\rm{e}}})^H\leq \sigma_z \left({\hat{\bf{h}}_z^{\rm{e}}}({\hat{\bf{h}}_z^{\rm{e}}})^H\right)$, (\ref{C13}) can be expressed as the following LMI with Lemma~\ref{lemma1}
	\begin{eqnarray}\label{C14}
\left[ \begin{array}{ccc}
\kappa_z{\bf{I}}\!-\!{\bf{F}}{\bf{V}}_k{\bf{F}}^H & -({\hat{\bf{h}}}_z^{\rm{e}}{\bf{F}}{\bf{V}}_k{\bf{F}}^H)^H \\
-{\hat{\bf{h}}}_z^{\rm{e}}{\bf{F}}{\bf{V}}_k{\bf{F}}^H & {\hat{\zeta}_z^k}\!-\!\kappa_z\sigma_z {\hat{\bf{h}}_z^{\rm{e}}}({\hat{\bf{h}}_z^{\rm{e}}})^H\!-\!{\hat{\bf{h}}}_z^{\rm{e}}{\bf{F}}{\bf{V}}_k{\bf{F}}^H({\hat{\bf{h}}}_z^{\rm{e}})^H
\end{array} 
\right ]\!\succeq\! {\bf{0}},
\end{eqnarray}  
where $\{\kappa_z\}$ is the slack variable.
Then, we rewrite $\Xi_k^z$  in (\ref{C15}) at the bottom of the next page.  Similar to (\ref{C13}) and (\ref{C14}), combining~Lemma~\ref{lemma1}, (\ref{C94}) can be expressed as the  LMI constraint (\ref{C16}) at the bottom of the next page.

For the OF (\ref{OptG0}), and the non-convex constraints~(\ref{OptA1}) and~(\ref{OptB1}) in problem~(\ref{OptG}), the method proposed in Section  III.~B can be used. To this end, (\ref{OptG}) is transformed into the following optimization problem
\setcounter{equation}{65}
\begin{subequations}\label{OptHA}
	\begin{align}
		\;\;\;\;\;\;\;&\underset{{\bf{\Omega}}}{\rm{max}}\;\sum_{k=1}^{K}W_{\rm{mm}}\hat{f}\left(\beta_k,\hat{\gamma}_k^{z},\hat{\gamma}_k^{z[n]}\right) \label{OptH0}\\
	&{\rm{s.t.}}\;\;{\rm(\ref{OptB4})-(\ref{OptB6}), (\ref{C12}), (\ref{C22}), (\ref{C61})},(\ref{C71}),(\ref{C81}),(\ref{C101}),\nonumber\\ &{\;\;\;\;\;\;\rm(\ref{C111}),({\ref{C14}}),(\ref{C16})}.
	\end{align}
\end{subequations}
where ${\bf{\Omega}}=\left\{{\bf{V}}_0,\{{\bf{V}}_k\},{\bf{\Lambda}},\{\varepsilon_k\},\{\lambda_k\},\{\theta_k\},\{\hat{\zeta}_k^z\},\{\hat{\mu}_k^z\},\{\hat{\chi}_k^z\},\{\tau_k\},\omega\right.$,
$\left.\{\beta_k\},\{\hat{\gamma}_k^z\},\{\kappa_z\},\{\upsilon_z\}\right\}$, $\hat{f}\left(\beta_k,\hat{\gamma}_k^{z},\hat{\gamma}_k^{z[n]}\right)=\log\left(1+\beta_k\right)-\log\left(1+\hat{\gamma}_k^{z[n]}\right)-\frac{\hat{\gamma}_k^z-\hat{\gamma}_k^{z[n]}}{1+\hat{\gamma}_k^{z[n]}}$.

The above optimization problem  will be convex if there is no rank-one constraint, and it can be solved by standard convex optimization techniques, such as interior-point method. Therefore,  by invoking rank-one relaxation, we can iteratively solve~(\ref{OptHA}) to obtain the solution of the original problem~(\ref{OptF}), following a similar procedure to that of  solving problems~(\ref{OptD}) and~(\ref{OptE}). 

Next, we study the rank relaxation problem. We first define the Lagrangian function  of the relaxed version of (\ref{OptHA}), which is expressed as (\ref{lagA}) at the bottom of the next page. 

 For ${\bf{V}}_0$, we have the same theorem to Theorem~\ref{theorem1} as well as the following Theorem~\ref{theorem3}.
\setcounter{equation}{67}
\begin{theorem}\label{theorem3}
	When $\psi_1^\ast\!>\!0, \psi_3^{i\ast}\!-\!\psi_8^{i\ast}\!\geq\! 0\; (i\!\neq\! k), {\bf{F}}^H{{\hat{\bf{H}}}}_{z}^H{\bf{T}}^{1\ast}_{z,k}{{\hat{\bf{H}}}}_{z}{\bf{F}}-\sum_{i\neq k}^{K}{\bf{F}}^H{{\hat{\bf{H}}}}_{z}^H{\bf{T}}^{2\ast}_{z,k}{{\hat{\bf{H}}}}_{z}{\bf{F}}\succeq {\bf{0}}$, we have 
	\begin{eqnarray}
	{\rm{rank}}({{\bf{V}}}_k^\ast)=1, k\in{\mathcal{K}},
	\end{eqnarray}
	where $\psi_1^\ast, \psi_3^{i\ast}, \psi_8^{i\ast}, {\bf{T}}^{1\ast}_{z,k},{\bf{T}}^{2\ast}_{z,k}$ denote the optimal Lagrange multipliers for the dual problem of~(\ref{OptHA}).
\end{theorem}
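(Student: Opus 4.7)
The plan is to mirror the KKT-based rank analysis used in Theorem~\ref{theorem2}, but now accounting for the two matrix multipliers ${\bf{T}}^{1\ast}_{z,k}$ and ${\bf{T}}^{2\ast}_{z,k}$ that arise from the $\mathcal{S}$-Procedure-based LMI constraints~(\ref{C14}) and~(\ref{C16}). Since the relaxed version of~(\ref{OptHA}) is convex and satisfies Slater's condition, strong duality holds and the KKT conditions are necessary and sufficient for optimality. Hence the optimal primal--dual pair must satisfy stationarity $\nabla_{{\bf{V}}_k}\mathcal{F}({\bf{\Pi}}_2)={\bf{0}}$, dual feasibility ${\bf{\Omega}}_k^\ast\succeq{\bf{0}}$, primal feasibility ${\bf{V}}_k^\ast\succeq{\bf{0}}$, and complementary slackness ${\bf{\Omega}}_k^\ast{\bf{V}}_k^\ast={\bf{0}}$. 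The last equation immediately gives ${\rm{rank}}({\bf{V}}_k^\ast)\leq L-{\rm{rank}}({\bf{\Omega}}_k^\ast)$, so the goal is to bound the rank of ${\bf{\Omega}}_k^\ast$ from below by $L-1$.

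First I would differentiate~(\ref{lagA}) with respect to ${\bf{V}}_k$ and collect the resulting terms into a canonical form
\begin{equation*}
{\bf{\Omega}}_k^\ast=\psi_1^\ast{\bf{I}}+\underbrace{\sum_{i\neq k}(\psi_3^{i\ast}-\psi_8^{i\ast})\overline{{\bf{H}}}_i}_{\succeq{\bf{0}}\text{ by assumption}}+\underbrace{\sum_{z=1}^{Z}{\bf{F}}^H{\hat{\bf{H}}}_z^H\left({\bf{T}}^{1\ast}_{z,k}-\sum_{i\neq k}{\bf{T}}^{2\ast}_{z,i}\right){\hat{\bf{H}}}_z{\bf{F}}}_{\succeq{\bf{0}}\text{ by assumption}}-(\psi_4^{k\ast}-\psi_7^{k\ast})\overline{{\bf{H}}}_k,
\end{equation*}
where every term after $\psi_1^\ast{\bf{I}}$ except the last is positive semidefinite under the hypotheses of the theorem. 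In particular, the condition ${\bf{F}}^H{\hat{\bf{H}}}_z^H{\bf{T}}^{1\ast}_{z,k}{\hat{\bf{H}}}_z{\bf{F}}-\sum_{i\neq k}{\bf{F}}^H{\hat{\bf{H}}}_z^H{\bf{T}}^{2\ast}_{z,i}{\hat{\bf{H}}}_z{\bf{F}}\succeq{\bf{0}}$ in the statement is precisely what is required to absorb the Eves-related contributions into a PSD bulk. Writing ${\bf{\Xi}}_k^\ast$ for the sum of the three PSD terms, we obtain ${\bf{\Omega}}_k^\ast={\bf{\Xi}}_k^\ast-(\psi_4^{k\ast}-\psi_7^{k\ast})\overline{{\bf{H}}}_k$ with ${\bf{\Xi}}_k^\ast\succeq\psi_1^\ast{\bf{I}}\succ{\bf{0}}$ because $\psi_1^\ast>0$.

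Next I would invoke the standard rank perturbation argument: since ${\bf{\Xi}}_k^\ast$ is strictly positive definite (hence of full rank $L$) and $\overline{{\bf{H}}}_k=\overline{{\bf{h}}}_k^H\overline{{\bf{h}}}_k$ is of rank one, Weyl's inequality and the rank inequality $|{\rm{rank}}({\bf{A}}+{\bf{B}})-{\rm{rank}}({\bf{A}})|\leq{\rm{rank}}({\bf{B}})$ together yield ${\rm{rank}}({\bf{\Omega}}_k^\ast)\geq L-1$. Combining this with complementary slackness produces ${\rm{rank}}({\bf{V}}_k^\ast)\leq 1$. Finally, to rule out ${\bf{V}}_k^\ast={\bf{0}}$, I would note that a zero primal block would force user $k$'s achievable rate to vanish, strictly decreasing the secrecy objective unless the slack variable $\beta_k^\ast=0$, which contradicts the active coupling in~(\ref{C22}) at any optimum in which $\psi_4^{k\ast}>0$; hence ${\rm{rank}}({\bf{V}}_k^\ast)=1$ exactly.

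The main obstacle is the third step: verifying that the Eves-related matrix $\sum_z{\bf{F}}^H{\hat{\bf{H}}}_z^H({\bf{T}}^{1\ast}_{z,k}-\sum_{i\neq k}{\bf{T}}^{2\ast}_{z,i}){\hat{\bf{H}}}_z{\bf{F}}$ is really PSD on the correct subspace and that it genuinely absorbs into ${\bf{\Xi}}_k^\ast$ without cancelling the $\psi_1^\ast{\bf{I}}$ term. The subtlety is that ${\hat{\bf{H}}}_z=[{\bf{I}};\hat{\bf{h}}_z^{\rm{e}}]$ has block structure inherited from the $\mathcal{S}$-Procedure, so one must carefully verify that the Schur-complement reduction of~(\ref{C14}) and~(\ref{C16}) maps onto the quadratic form in ${\bf{V}}_k$ as claimed, and that the hypothesis on ${\bf{T}}^{1\ast}$ and ${\bf{T}}^{2\ast}$ in the theorem is consistent with their dual feasibility. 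Once that block calculation is carried out, the rest of the argument proceeds exactly as in the proof of Theorem~\ref{theorem2} in Appendix~\ref{appendixB}.
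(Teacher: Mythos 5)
Your proposal is correct and follows essentially the same route as the paper, whose own proof of Theorem~\ref{theorem3} simply defers to the KKT-based argument of Appendices~\ref{appendixA} and~\ref{appendixB}: differentiate the Lagrangian~(\ref{lagA}) to express ${\bf{\Omega}}_k^\ast$ as a positive-definite bulk (guaranteed by $\psi_1^\ast>0$ and the stated PSD hypotheses on the multipliers) minus a rank-one perturbation in $\overline{{\bf{H}}}_k$, conclude ${\rm{rank}}({\bf{\Omega}}_k^\ast)\geq L-1$, and use complementary slackness together with the non-triviality of ${\bf{V}}_k^\ast$ to pin the rank at exactly one. Your reading of the Eves-related terms (summing ${\bf{T}}^{2\ast}_{z,i}$ over $i\neq k$) is the correct consequence of differentiating~(\ref{lagA}) and matches the intended role of the theorem's hypothesis.
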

\begin{proof}
	Refer to Appendices~\ref{appendixA} and~\ref{appendixB}.
\end{proof}

In addition, the reconstruction of rank-one solution can refer to Appendices~\ref{appendixC} and~\ref{appendixD}.

Let us now analyze the complexity of solving~(\ref{OptHA}). Given an iteration accuracy $\epsilon$, the number of iterations is on the order $\sqrt{\Delta}\ln(1/\epsilon)$, where $\Delta=5KZ\!+\!2KZMN\!+\!KL\!+\!7K\!+\!N\!+\!1$ denotes the threshold parameters related to the constraints~\cite{43Com}. Additionally, (\ref{OptD}) includes $(KZ+3K+1)$ linear constraints, ($KZ+K$) two-dimensional LMI constraints, one  $N$-dimensional LMI constraint, $2KZ$  $(MN+1)$-dimensional  LMI constraints,  $K$  $L$-dimensional  LMI constraints and $K$ second order cone constraints. Therefore, the complexity of solving~(\ref{OptD}) is on the order of 
\begin{eqnarray}
{\mathcal{O}}\left({\overline{\xi}}\sqrt{\Delta}\ln(1/\epsilon)({\overline{\xi}}_1+{\overline{\xi}}_2\xi+{\overline{\xi}}_3+{\overline{\xi}}^2)\right),
\end{eqnarray} 
where ${\overline{\xi}}={\mathcal{O}}(N^2+KL^2)$ and $N^2+KL^2$ denotes the number of decision variables,  ${\overline{\xi}}_1=2KZ(MN+1)^3+N^3+KL^3+9KZ+11K+1$, ${\overline{\xi}}_2=2KZ(MN+1)^2+N^2+KL^25KZ+7K+1$, ${\overline{\xi}}_3=K(L+2)^2$. 

\begin{table} [t]
	\caption{Default Parameters Used In Simulation.} 
	\renewcommand{\arraystretch}{0.8}
	\label{Table I} 
	\centering
	\begin{tabular}{l|r} 
		\hline  
		\bfseries Parameters & \bfseries Value \\ [0.5ex] 
		\hline\hline
		Number of TAs at the CP  &$N=32$ \\
		\hline
		Number of TAs at each BS  &$M=4$ \\
		\hline
		Number of the BSs&$L=6$\\
		\hline
		Number of the users& $K=4$\\
		\hline
		Number of the Eves &$Z=2$\\
		\hline
		Number of the clusters & $C=4$\\
		\hline
		Distance between the CP and the BS cluster & 500  [m] \\
		\hline
		Radius of the BS cluster & 30 [m] \\
		\hline
		MmWave bandwidth & $W_{\rm{mm}}=50$ [MHz]  \\
		\hline
		Microwave bandwidth & $W_{\rm{mc}}=20$ [MHz]\\
		\hline  
		Standard deviation~\cite{44Channel} & $\sigma=4.6$ [dB]\\
		\hline
		Path loss at the mmWave frequency~\cite{44Channel} & $69.7+24\log_{10}(d)+X_{\sigma}$ [dB] \\ 
		\hline
		Path loss at the microwave frequency~\cite{40Zhang} & $38+30\log_{10}(d)$ [dB] \\ 
		\hline
		Noise variance~\cite{21Tao}& -174 [dBm/Hz] \\
		\hline 
	\end{tabular}
\end{table}

\section{Numerical Results}
In this section, numerical results are presented to evaluate the performance of the proposed algorithms.  {In addition, according to~\cite{44Channel}, the path-loss of the mmWave channel model can be expressed~as:
${\rm{PL}}(d)[{\rm{dB}}]={\rm{PL}}(d_0)+10{\overline{n}}\log_{10}(\frac{d}{d_0})+X_{\sigma}, {\rm{for}}\;d\geq d_0,$
where ${\rm{PL}}(d_0)$ is the close-in free-space path-loss in dB, $d_0$ is the close-in free-space reference distance, $\overline{n}$ is the best-fit minimum mean square error path-loss exponent over all measurement from a particular measurement campaign, and finally $X_{\sigma}$  is zero-mean Gaussian random variable with a standard deviation $\sigma$ in dB, representing the large-scale signal fluctuations resulting from shadowing owing to obstructions, such as buildings}. In our simulations, we set the RF carrier frequency to be 73 GHz, and the reference distance to $d_0=1$m. Correspondingly, the path-loss becomes ${\rm{PL}}(d_0)=69.7$ and ${\overline{n}}=2.4$. The default simulation parameters are listed in Table~\ref{Table I}.  Unless otherwise specified, these default values are used in simulation. The BSs, users and Eves are randomly located in the cell radius following a uniform distribution. The azimuth angle of departure at each BS is uniformly distributed over $[0, 2\pi]$. Additionally, we assume that the imperfect CSI of all Eves have the same error bound for simplicity.  {We plot the deployment figure of BSs, users and Eves as showed in~Fig.~\ref{figure1}.}
\begin{figure}
	\begin{center}
		\includegraphics[width=7cm,height=8cm]{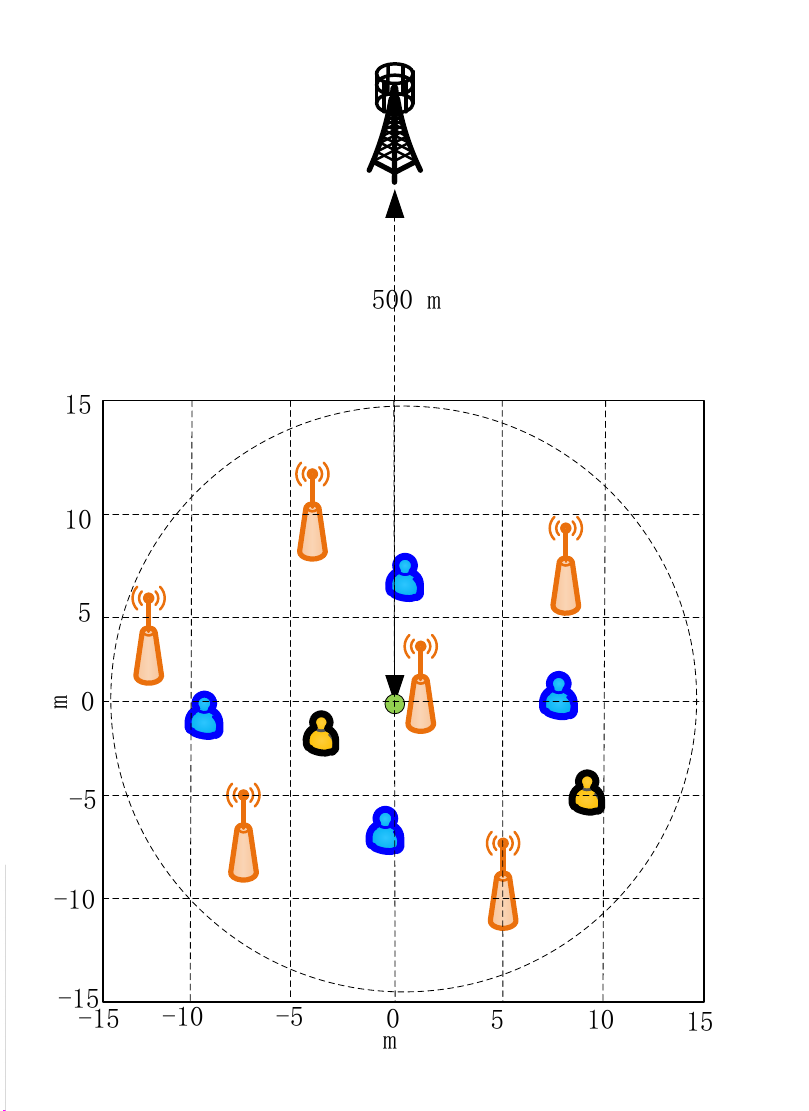}
		\caption{Simulation scenario with 6 BSs, 4 users and 2 Eves randomly distributed.}
	    \label{figure1}
	\end{center}
\end{figure} 

\begin{figure}[t]
	\begin{center}
		\includegraphics[width=9cm,height=7cm]{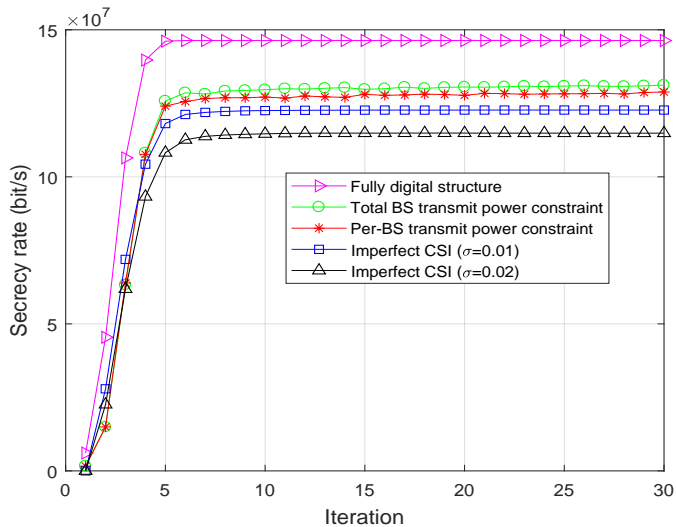}
		\caption{Secrecy  rate versus the number of iterations, $P_{\rm{max}}^{\rm{BS}}=15$ dBm and $P_{\rm{max}}^{\rm{AC}}=46$ dBm.}
		\label{Iteration}
	\end{center}
\end{figure}
\begin{figure}[t]
	\begin{center}
		\includegraphics[width=9cm,height=7cm]{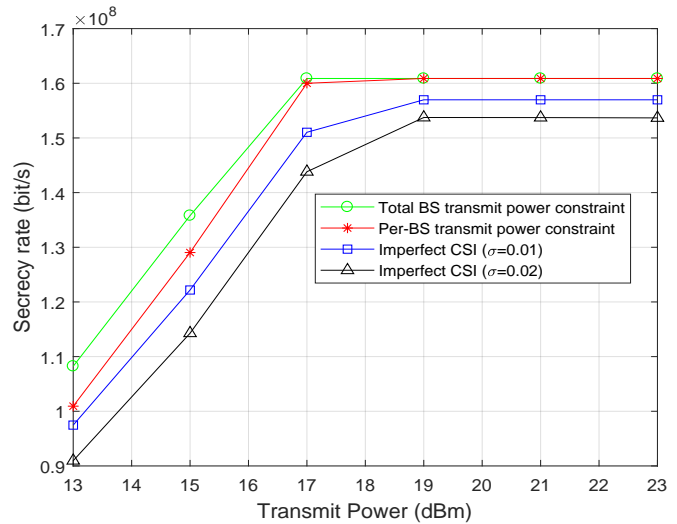}
		\caption{Secrecy rate  versus the total affordable transmit power for $L$ BSs, $P_{\rm{max}}^{\rm{CP}}=46$ dBm.}
		\label{BSPower}
	\end{center}
\end{figure}
\begin{figure}[t]
	\begin{center}
		\includegraphics[width=9cm,height=7cm]{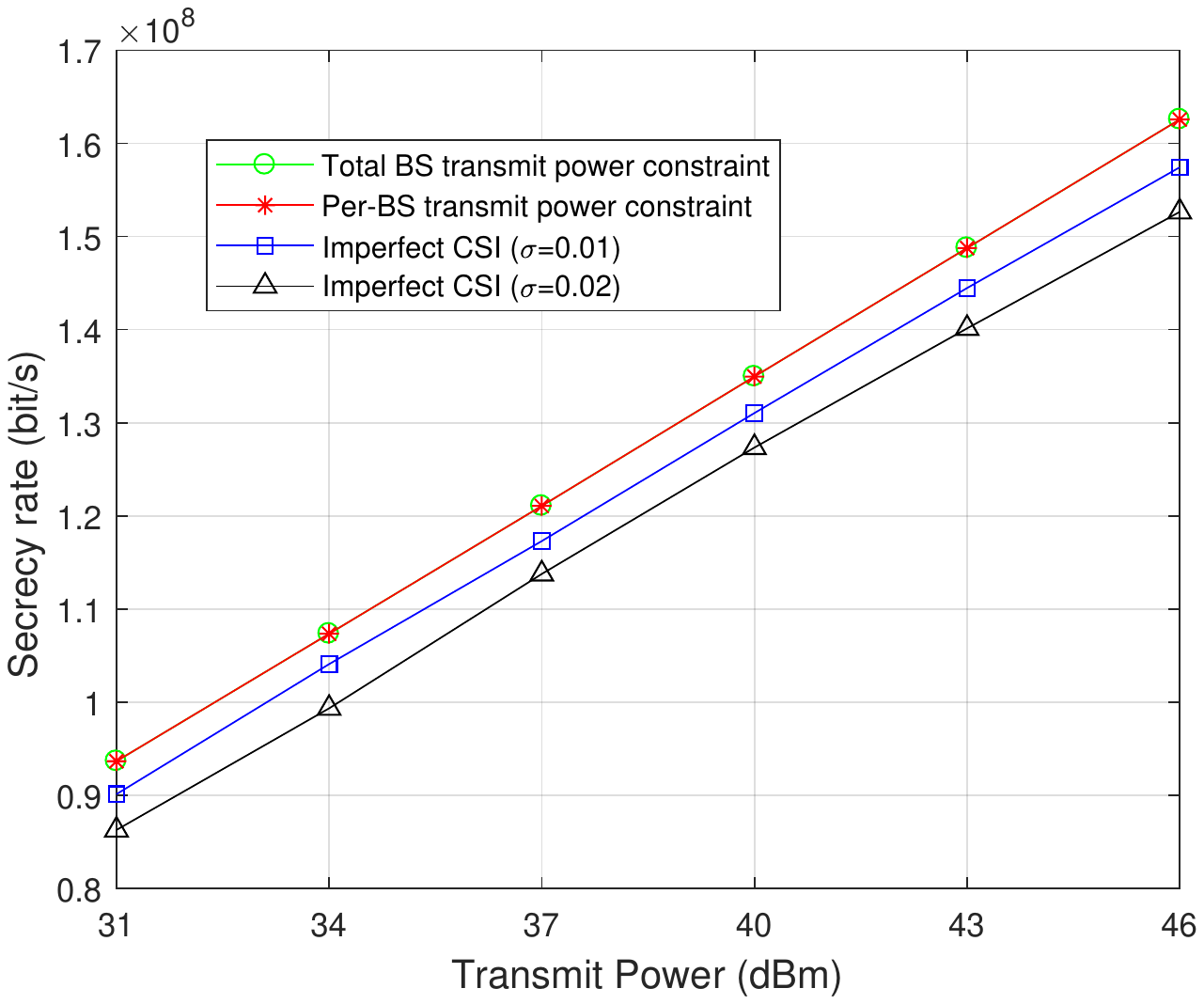}
		\caption{Secrecy rate versus the affordable transmit power of the CP, $P_{\rm{max}}^{\rm{BS}}=20$ dBm.}
		\label{CPPowerBS20}
	\end{center}
\end{figure}
\begin{figure}[t]
	\begin{center}
		\includegraphics[width=9cm,height=7cm]{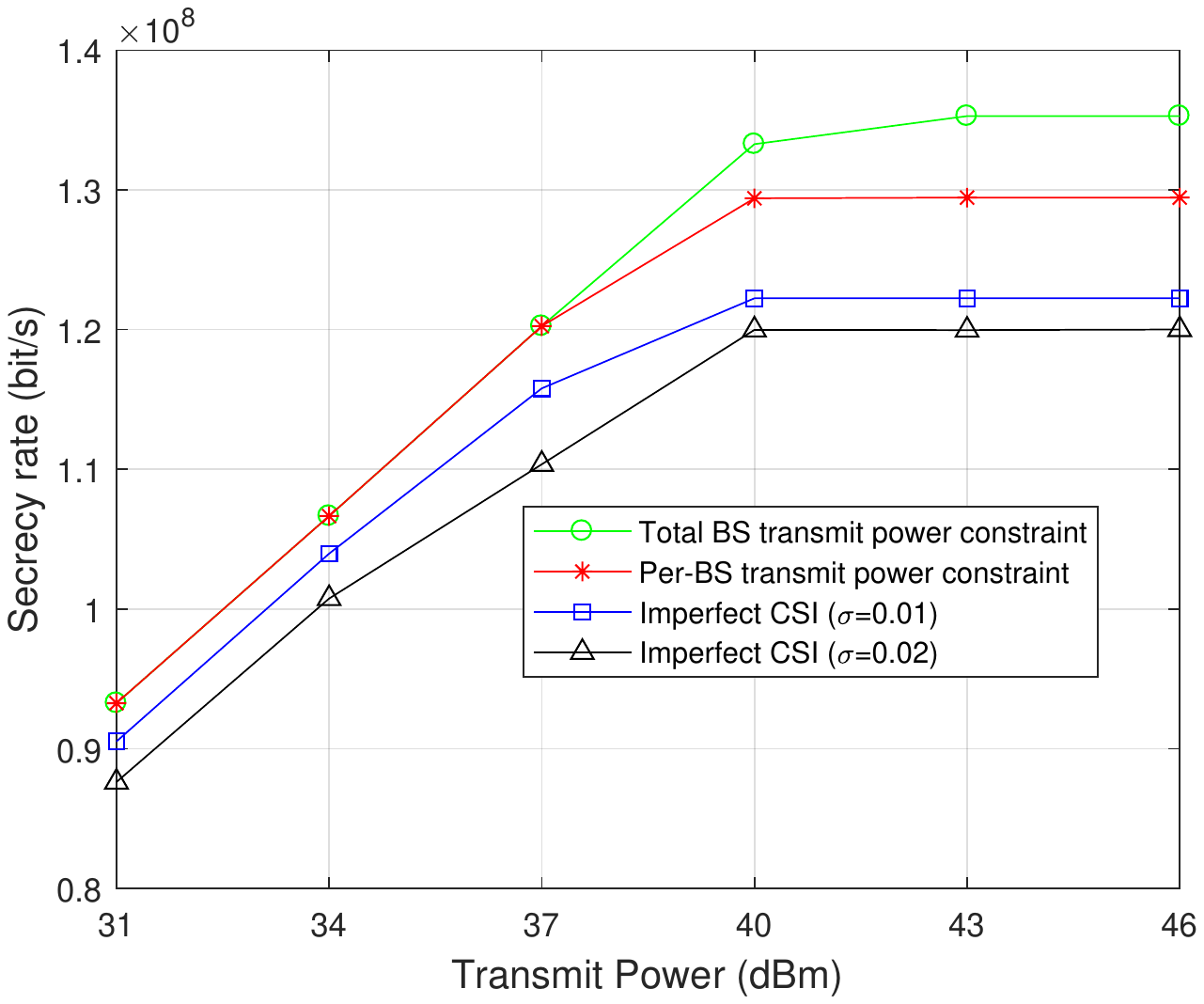}
		\caption{Secrecy rate versus the affordable transmit power of the CP, $P_{\rm{max}}^{\rm{BS}}=15$ dBm.}
		\label{CPPowerBS15}
	\end{center}
\end{figure}
\begin{figure}[t]
	\begin{center}
		\includegraphics[width=9cm,height=7cm]{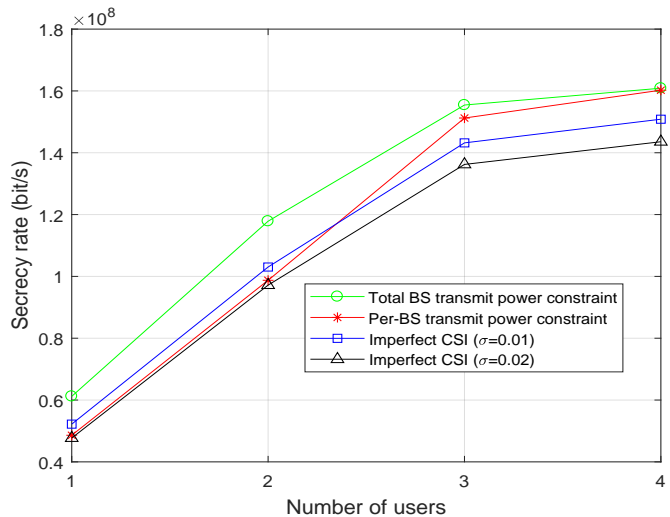}
		\caption{Secrecy rate versus the number of users, $P_{\rm{max}}^{\rm{BS}}=17$ dBm and $P_{\rm{max}}^{\rm{AC}}=46$ dBm.}
		\label{UserNumber}
	\end{center}
\end{figure}
\begin{figure}[t]
	\begin{center}
		\includegraphics[width=9cm,height=7cm]{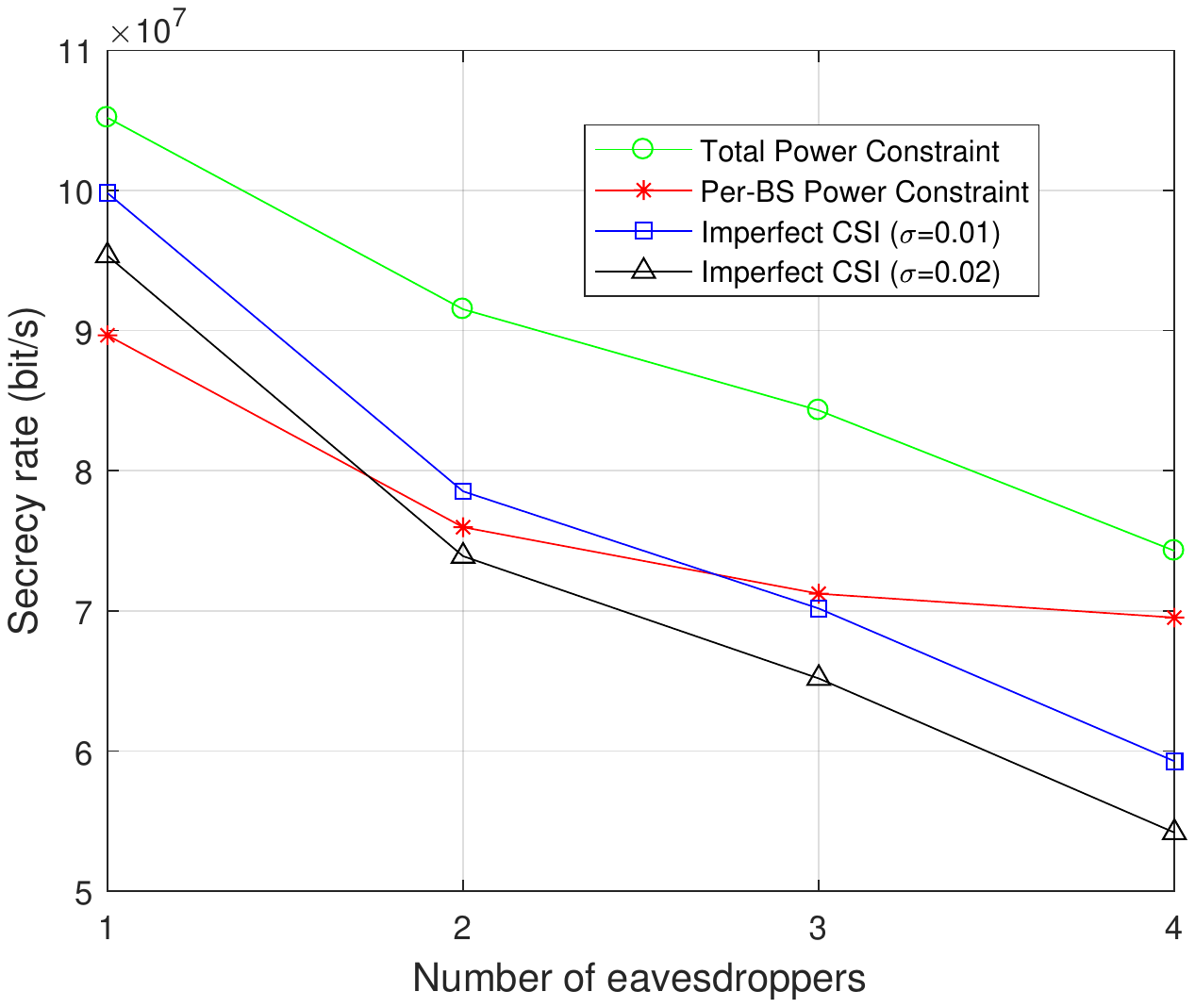}
		\caption{Secrecy rate versus the number of Eves,  $P_{\rm{max}}^{\rm{BS}}=15$, $P_{\rm{max}}^{\rm{AC}}=46$ dBm, $K=2$.}
		\label{EUserNumber}
	\end{center}
\end{figure}

Fig.~\ref{Iteration} demonstrates the convergence property of the different schemes, where we set $P_{\rm{max}}^{\rm{BS}}=15$ dBm and $P_{\rm{max}}^{\rm{AC}}=46$~dBm. For the per-BS transmit power constraint,  the total BS transmit power is equally allocated to each BS. Observe that the secrecy rate converges for all schemes. Additionally, the secrecy rate under the total BS transmit power constraint is slightly higher than that under per-BS transmit power constraint. This is because the affordable transmit power  of each BS  is fixed and limited under the per-BS transmit power constraint. However, the power can be flexibly allocated among BSs according to the CSI so that it improves the secrecy rate under the total BS transmit power constraint. As expected,  the secrecy rate decreases as the channel estimation error increases. Finally, we compare the performance under our adopted single RF chain structure to that under the fully digital structure where means that each antenna is connected to a single RF chain.  Although the secrecy rate is the highest for the fully digital structure, the associated hardware cost and energy consumption are high.

Fig.~\ref{BSPower} plots the secrecy rate versus the total BS affordable transmit power of the different schemes, and  we set $P_{\rm{max}}^{\rm{CP}}=46$ dBm. It can be observed that the secrecy rate first increases and then becomes saturated for all four cases. Although the secrecy rate can be increased as the total BS transmit power increases, the rate provided by the fronthaul link is limited due to the CP's limited transmit power. Therefore, even when the access link can potentially provide higher rate, the effective secrecy rate will be limited by the fronthaul link. Additionally, one can observe that the secrecy rate under the total BS transmit power constraint is higher than that under the per-BS transmit power constraint when the affordable transmit power is low, whereas they are almost the same when  the total BS affordable transmit power is high. {In fact, the affordable transmit power is indeed dissipated when it is lower}. However, as total BS affordable transmit power increases, it is unnecessary for each BS to dissipate all the affordable transmit power for increasing the rate, and this is because the fronthaul link does not provide  a sufficiently high rate due to the CP's limited transmit power. As a  result, the secrecy rates of the two schemes are the same.

In Fig.~\ref{CPPowerBS20}, we show the secrecy rate versus the CP's affordable transmit power, and we set $P_{\rm{max}}^{\rm{BS}}=20$ dBm. We observe that the secrecy rate increases as the CP transmit power increases for all schemes. Furthermore, the secrecy rate under the total BS and per-BS transmit power constraints is the same. This is because that when $P_{\rm{max}}^{\rm{BS}}=20$  and $P_{\rm{max}}^{\rm{CP}}\in[31:46]$ dBm, the  rate provided by the fronthaul link is still lower than the rate of the access link. To characterize the access link, Fig.~\ref{CPPowerBS15} shows the secrecy rate versus the CP's affordable transmit power at $P_{\rm{max}}^{\rm{BS}}=15$ dBm. Observe that the secrecy rate first increases and then saturates as the CP transmit power increases since  the BSs cannot provide a higher rate due to their limited transmit power. In this case, the effective secrecy rate cannot be improved even when the fronthaul link can  support a higher rate. 

Fig.~\ref{UserNumber} shows the secrecy rate versus the number of users under four schemes, where we set $P_{\rm{max}}^{\rm{BS}}=17$ dBm and $P_{\rm{max}}^{\rm{AC}}=46$ dBm. Indeed, it is expected that more users result in a higher sum secrecy rate. However, the secrecy rate is reduced when the channel estimation error is increased. Meanwhile, the gap of  secrecy rate between the total BS and per-BS  transmit power constraints becomes smaller as the number of users increases. In fact, when a single user supported by the BS cluster is close to  any  one of the BSs,  it will  be far away from the other BSs. In this case, the other BSs must overcome the large path loss to serve this user. However, as  the number of users increases, each BS may serve several users in its vicinity, hence the BSs tend to allocate more power to these ``near" users for improving their secrecy rate when operating under the  per-BS transmit power constraint.  

Finally, in Fig.~\ref{EUserNumber}, we plot the secrecy rate versus the number of Eves for different schemes, where  we set $P_{\rm{max}}^{\rm{BS}}=15$ dBm, $P_{\rm{max}}^{\rm{AC}}=46$ dBm and $K=2$. As expected,  the secrecy rate decreases as the number of Eves increases. Apparently, more Eves lead to higher wiretap rate, hence automatically decreasing the secrecy rate. Additionally, the secrecy rate gap between the total BS and per-BS  transmit power constraints becomes smaller, as the number of Eves increases. The reason is similar to that in Fig.~\ref{UserNumber}.

\section{Conclusions}
In this paper, we have investigated a secure BF design problem in the BS-cooperation-aided mmWave C-RAN relying on a microwave multicast fronthaul. We designed an advanced analog BF design scheme and we jointly optimized the multicast BF, digital BF and the AN covariance for maximizing the secrecy rate under the total BS and per-BS transmit power constraints. On this basis, we proposed convex approximation techniques and a CCCP-based iterative algorithm for solving them.  The perfect CSI assumption of the Eves links was subsequently replaced by imperfect CSI and the worst case SRM problem was considered. Finally,  an $S$-procedure-based iterative algorithm was developed. Our results have shown that the secrecy rate is the highest under the total BS transmit power constraint, and the computational complexity of solving the above problem is also the lowest, which is quantified by ${\mathcal{O}}\left(\xi\sqrt{\Delta}\ln(1/\epsilon)(\xi_1+\xi_2\xi+\xi_3+\xi^2)\right)$. Although the secrecy rate is relatively low under the imperfect CSI assumption for the Eves links compared to the other schemes and the computational complexity of solving the formulated problem is the highest given by ${\mathcal{O}}\left({\overline{\xi}}\sqrt{\Delta}\ln(1/\epsilon)({\overline{\xi}}_1+{\overline{\xi}}_2\xi+{\overline{\xi}}_3+{\overline{\xi}}^2)\right)$, the problem considered  is more practical and meaningful.

\appendices
\section{Proof of Theorem~\ref{theorem1}}\label{appendixA}
It is clear that the SDP-relaxed problem~(\ref{OptD}) is jointly convex with respect to the optimization variables, and thus it satisfies Slater's constraint qualification. Therefore, the Karush-Kuhn-Tucker (KKT) conditions are necessary and sufficient conditions for the optimal solutions of problem~(\ref{OptD}). Next, we focus our attention on the KKT conditions related to the optimal $\{{\bf{V}}_k^\ast\}$ and ${\bf{V}}_0^\ast$, namely:
 \begin{subequations}\label{Lag2}
	\begin{align}
     	\;\;\;\;\;\;\;\psi_2^\ast{\bf{I}}-\sum_{l=1}^L\psi_9^{l\ast}{\bf{G}}_l&={\bf{\Omega}}_0^\ast,\label{Lag21}\\
     \;\;\;\;\;{\bf{Y}}_k-\psi_4^{k\ast}{\overline{{\bf{H}}}}_k&={\bf{\Omega}}_k^\ast,k\in{\mathcal{K}},\label{Lag22}\\
     \;\;\;\;\;\psi_9^{l\ast}\left({\rm{Tr}}({\bf{G}}_l{\bf{V}}_0^{\ast})-W_{\rm{mc}}N_0(e^{\omega^\ast/\eta}-1)\right)&=0,l\in{\mathcal{L}},\label{Lag23}\\
     \;\;\;\;\;{\bf{\Omega}}_k^\ast{\bf{V}}_k^\ast&={\bf{0}},k\in\{0,\mathcal{K}\},\label{Lag24}\\
     \;\;\;\;\;{\bf{\Omega}}_k^\ast&\succeq{\bf{0}},k\in\{0,\mathcal{K}\},\label{Lag25}
	\end{align}
\end{subequations} 
where ${\bf{Y}}_k=\psi_1^\ast{\bf{I}}+\sum_{i\neq k}^{K}(\psi_3^{i\ast}-\psi_8^{i\ast}){\overline{{\bf{H}}}}_i+\sum_{z=1}^Z\left(\psi_5^{z,k\ast}-\sum_{i\neq k}\psi_6^{z,i\ast}\right){\rm{Tr}}({\overline{{\bf{H}}^e}}_z)+\psi_7^{k\ast}{\overline{{\bf{H}}}}_k$, and $\{\psi_1^\ast,\psi_2^\ast,\{\psi_3^{k\ast}\},\{\psi_4^{k\ast}\},\{\psi_5^{z,k\ast}\},\{\psi_6^{z,k\ast}\}$,
$\{\psi_7^{k\ast}\},\{\psi_8^{k\ast}\},\{\psi_9^{l\ast}\},\{{\bf{\Omega}}_k^\ast\}$ are the optimal Lagrange multipliers for the dual problem of~(\ref{OptD}).  

First, we prove ${\rm{rank}}({{\bf{V}}}_0^\ast)\leq L$. Multiplying both sides of
(\ref{Lag21}) by ${{\bf{V}}}_0^\ast$ and combing (\ref{Lag24}), we have
\begin{eqnarray}
 \psi_2^\ast{{\bf{V}}}_0^\ast=\sum_{l=1}^L\psi_9^{l\ast}{\bf{G}}_l{{\bf{V}}}_0^\ast.
\end{eqnarray}
When $\psi_2^\ast>0$, the following equation must  hold 
\begin{eqnarray}
	{\rm{rank}}({{\bf{V}}}_0^\ast)=	{\rm{rank}}(\psi_2^\ast{{\bf{V}}}_0^\ast)={\rm{rank}}\left(\sum_{l=1}^L\psi_9^{l\ast}{\bf{G}}_l{{\bf{V}}}_0^\ast\right),
\end{eqnarray}
as
${\rm{rank}}\left(\sum_{l=1}^L\psi_9^{l\ast}{\bf{G}}_l{{\bf{V}}}_0^\ast\right)\leq {\rm{min}}\left\{{\rm{rank}}\left(\sum_{l=1}^L\psi_9^{l\ast}{\bf{G}}_l\right),{\rm{rank}}\left({{\bf{V}}}_0^\ast\right)\right\},$
and thus we have 
\begin{eqnarray}\label{rankAA}
{\rm{rank}}({{\bf{V}}}_0^\ast)=	{\rm{rank}}\left(\sum_{l=1}^L\psi_9^{l\ast}{\bf{G}}_l\right)={\rm{rank}}\left(\sum_{l=1}^L\psi_9^{l\ast}{\bf{g}}_l^H{\bf{g}}_l\right)\leq L.
\end{eqnarray}

Next, we prove that there always exists a ${\bf{V}}_0^\ast$ so that ${\rm{rank}}({{\bf{V}}}_0^\ast)\leq \sqrt{L}$. When the optimal solutions of problem~(\ref{OptD}) are obtained, there exists at least an $l$ satisfying $\omega^\ast= \eta\log\left(1+\frac{{\rm{Tr}}({\bf{G}}_l{\bf{V}}_0^\ast)}{W_{\rm{mc}}N_0}\right)$. We first consider a special case, namely $\omega^\ast= \eta\log\left(1+\frac{{\rm{Tr}}({\bf{G}}_l{\bf{V}}_0^\ast)}{W_{\rm{mc}}N_0}\right)$ for any $l$, and we have
\begin{eqnarray}\label{rankl}
	{\rm{Tr}}({\bf{G}}_l{\bf{V}}_0^{\ast})=W_{\rm{mc}}N_0(e^{\omega^\ast/\eta}-1),l\in{\mathcal{L}}.
\end{eqnarray}

Let $\mathcal{R}={\rm{rank}}({{\bf{V}}}_0^\ast)$. Then ${{\bf{V}}}_0^\ast$ can be decomposed as ${{\bf{V}}}_0^\ast={{\bf{X}}}{{\bf{X}}}^H$, ${{\bf{X}}}\in \mathbb{C}^{N\times \mathcal{R}}$, and we have
\begin{eqnarray}\label{rank2}
{\rm{Tr}}({\bf{G}}_l{\bf{V}}_0^{\ast})={\rm{Tr}}({{\bf{X}}}^H{\bf{G}}_l{\bf{X}})=W_{\rm{mc}}N_0(e^{\omega^\ast/\eta}-1),l\in{\mathcal{L}}.
\end{eqnarray}
We define $\Gamma$ as a  $\mathcal{R}$-by-$\mathcal{R}$ Hermitian matrix, and formulate the following linear equation:
\begin{eqnarray}\label{rank3}
{\rm{Tr}}({{\bf{X}}}^H{\bf{G}}_l{\bf{X}}\Gamma)=0,l\in{\mathcal{L}}.
\end{eqnarray}

For $\Gamma$, there are $\mathcal{R}(\mathcal{R}+1)/2$ unknown real parts and $\mathcal{R}(\mathcal{R}-1)/2$ unknown imaginary parts, and thus the total number of real-value variables is $\mathcal{R}^2$. Therefore, there are $\mathcal{R}^2$ unknowns and $L$ equations in (\ref{rank3}).

If $\mathcal{R}^2>L$, there must exist a nonzero solution for linear (\ref{rank3}). We denote $\varrho_l (l\in\mathcal{L})$ as eigenvalues of $\Gamma$, and define 
\begin{eqnarray}
	\varrho^\ast=\max \{|\varrho_l|:l\in\mathcal{L}\},
\end{eqnarray}
and it is easy to obtain the following matrix
\begin{eqnarray}
	{\bf{I}}-\frac{1}{\varrho^\ast}\Gamma\succeq {\bf{0}}.
\end{eqnarray}
Let us define ${{\hat{\bf{V}}}}_0^\ast={{\bf{X}}}({\bf{I}}-\frac{1}{\varrho^\ast}\Gamma){{\bf{X}}}^H$,  which has the following properties~\cite{44Rank2}:
\begin{itemize}
\item Rank of ${{\hat{\bf{V}}}}_0^\ast$ at least reduces one: ${\rm{rank}({{\hat{\bf{V}}}}_0^\ast)}\leq \mathcal{R}-1$.
\item Primal feasibility:
${\rm{Tr}}({\bf{G}}_l{\hat{\bf{V}}}_0^{\ast})=W_{\rm{mc}}N_0(e^{\omega^\ast/\eta}-1),l\in{\mathcal{L}}.$
\item Power constraint:
${\rm{Tr}}({\hat{\bf{V}}}_0^\ast)\leq P_{\rm{max}}^{\rm{AC}}$.
\end{itemize}

According to the above properties, it the clear that the feasible solution ${{\hat{\bf{V}}}}_0^\ast$ is also the optimal solution for problem~(\ref{OptD}). If we have ${\rm{rank}^2({{\hat{\bf{V}}}}_0^\ast)}>L$, repeat the above procedure, else, stop. Then we have ${\rm{rank}^2({{\hat{\bf{V}}}}_0^\ast)}\leq L$, namely ${\rm{rank}({{\hat{\bf{V}}}}_0^\ast)}\leq\sqrt{L}$.

If there are $L\!-\!1$ equations satisfying $\omega^\ast= \eta\log\left(1+\frac{{\rm{Tr}}({\bf{G}}_l{\bf{V}}_0^\ast)}{W_{\rm{mc}}N_0}\right)$, we can use the above method to obtain ${\rm{rank}({{\hat{\bf{V}}}}_0^\ast)}\leq\sqrt{L-1}$. Therefore, there always exists a ${\bf{V}}_0^\ast$ such that 
${\rm{rank}}({{\bf{V}}}_0^\ast)\leq \sqrt{L}$. 

Next, we prove that if there exists any $l$ so that  $\|{\bf{g}}_{l}\|<\|{\bf{g}}_{l'}\|(l'\in\{1,\dots,l-1,l+1,\dots,L\})$, we have
${\rm{rank}}({{\bf{V}}}_0^\ast)=1$. According to the above assumption, it is easy to obtain the following conclusions
 \begin{subequations}
 	\begin{align}
 	{\rm{Tr}}({\bf{G}}_l{\bf{V}}_0^{\ast})&=W_{\rm{mc}}N_0(e^{\omega^\ast/\eta}-1),\\
 	{\rm{Tr}}({\bf{G}}_{l'}{\bf{V}}_0^{\ast})&> W_{\rm{mc}}N_0(e^{\omega^\ast/\eta}\!-\!1),l'\in\{1,\!\dots\!,l\!-\!1,l\!+\!1,\!\dots\!,L\}.
 	\end{align}
 \end{subequations}

Combining (\ref{Lag23}), we can obtain $\psi_9^{l\ast}>0$ and $\psi_9^{l'\ast}=0$. From (\ref{rankAA}), the following equation holds 
\begin{eqnarray}\label{rankAB}
{\rm{rank}}({{\bf{V}}}_0^\ast)=	{\rm{rank}}\left(\sum_{l=1}^L\psi_9^{l\ast}{\bf{G}}_l\right)={\rm{rank}}\left(\sum_{l=1}^L\psi_9^{l'\ast}{\bf{g}}_l^H{\bf{g}}_l\right)=1,
\end{eqnarray}
which concludes the proof.
\section{Proof of Theorem~\ref{theorem2}}\label{appendixB}
According to~(\ref{Lag22}), when $\psi_1^\ast>0, \psi_3^{i\ast}-\psi_8^{i\ast}\geq 0\;(i\neq k), \psi_5^{z,k\ast}-\sum_{i\neq k}^{K}\psi_6^{z,i\ast}\geq 0$, ${\bf{Y}}_k$ is a positivedefinite matrix which has full rank, namely we have ${\rm{rank}}({\bf{Y}}_k)=L$. Then  we have 
\begin{eqnarray}\label{rank4}
\begin{aligned}
	{\rm{rank}}({\bf{\Omega}}_k^\ast)&={\rm{rank}}({\bf{Y}}_k-\psi_4^{k\ast}{\overline{{\bf{H}}}}_k)\geq {\rm{rank}}({\bf{Y}}_k)-{\rm{rank}}(\psi_4^{k\ast}{\overline{{\bf{h}}}}_k^H{\overline{{\bf{h}}}}_k)\\
	&\geq  L-1.
\end{aligned}
\end{eqnarray} 
While according to~(\ref{rank4}), the  rank of ${\bf{\Omega}}_k^\ast$ is either $L$ or $L-1$. If ${\rm{rank}}({\bf{\Omega}}_k^\ast)=L$, ${\bf{V}}^\ast_k={\bf{0}}$ due to~(\ref{Lag24}), which means that the users cannot receive any signal. Thus, we have ${\rm{rank}}({\bf{\Omega}}_k^\ast)=L-1$, and the null space of ${\bf{\Omega}}_k^\ast$ has a single dimension. Meanwhile, (\ref{Lag24}) indicates that ${\bf{\Omega}}_k^\ast$ must lie in the null-space of ${\bf{\Omega}}_k^\ast$. Therefore, it must hold ${\rm{rank}}({\bf{\Omega}}_k^\ast)=1$, which concludes the proof.
\section{The Rank-One Reconstruction For ${\bf{V}}_k^\ast$ }\label{appendixC}
Let us introduce the notation of  ${\rm{rank}}({\bf{Y}}_k)=r$. In Appendix~\ref{appendixB}, we assume $\psi_1^\ast>0, \psi_3^{i\ast}-\psi_8^{i\ast}\geq 0\;(i\neq k), \psi_5^{z,k\ast}-\sum_{i\neq k}^{K}\psi_6^{z,i\ast}\geq 0$, and $r=L$. When the above assumptions do not hold, we have $r<L$, which implies that ${\bf{Y}}_k$ is not a full-rank matrix. Then, we define ${\bf{\Upsilon}}_k\in{\mathbb{C}}^{L\times L-r}$ as the orthogonal basis of the null space of ${\bf{Y}}_k$, where ${\bf{\Upsilon}}_k^H{\bf{\Upsilon}}_k={\bf{I}}$, and  the following must hold:
\begin{eqnarray}
	{\bf{Y}}_k{\bf{\Upsilon}}_k=0,\;\;\;{\rm{rank}}({\bf{\Upsilon}}_k)=L-r.
\end{eqnarray}

Let ${\bf{d}}_{k,i}\in{\mathbb{C}^{L\times 1}}$ denote the $i$th column of ${\bf{\Upsilon}}_k$, where $1\leq i\leq L-r$. Then, we have
\begin{eqnarray}
	{\bf{d}}_{k,i}^H{\bf{\Omega}}_k^\ast{\bf{d}}_{k,i}={\bf{d}}_{k,i}^H({\bf{Y}}_k-\psi_4^{k\ast}{\overline{{\bf{h}}}}_k^H{\overline{{\bf{h}}}}_k){\bf{d}}_{k,i}=-\psi_4^{k\ast}|{\overline{{\bf{h}}}}_k{\bf{d}}_{k,i}|^2.
\end{eqnarray}

According to~(\ref{C22}) and the KKT condition~(\ref{Lag2}), it is clear that $\psi_4^{k\ast}>0$ and ${\bf{\Omega}}_k^\ast\succeq {\bf{0}}$, and thus
 ${\overline{{\bf{h}}}}_k^H{\overline{{\bf{h}}}}_k{\bf{\Upsilon}}_k={\bf{0}}$.  As a result, we have
 ${\bf{\Omega}}_k^\ast{\bf{\Upsilon}}_k=({\bf{Y}}_k-\psi_4^{k\ast}{\overline{{\bf{h}}}}_k^H{\overline{{\bf{h}}}}_k){\bf{\Upsilon}}_k={\bf{0}}.$
Combining~${\rm{rank}}({\bf{\Upsilon}}_k)=L-r$, we have ${\rm{rank}}({\bf{\Omega}}_k)\leq r$.
In addition, it can be shown from~(\ref{Lag22}) that ${\rm{rank}}({\bf{\Omega}}_k^\ast)\geq {\rm{rank}}({\bf{Y}}_k)-{\rm{rank}}(\psi_4^{k\ast}{\overline{{\bf{h}}}}_k^H{\overline{{\bf{h}}}}_k)=r-1$. Therefore, we have 
$L-r	\leq {\rm{rank}}({\bf{V}}_k^\ast)\leq  L-r+1.$
However, when ${\rm{rank}}({\bf{V}}_k^\ast)={\rm{rank}}({\bf{\Upsilon}}_k^\ast)=L-r$, we have ${\bf{V}}_k^\ast={\bf{\Upsilon}}_k^\ast$ via~(\ref{Lag23}) and~(\ref{Lag24}), and ${\bf{V}}_k^\ast$ can be expressed as ${\bf{V}}_k^\ast=\sum\nolimits_{i=1}^{L-r}a_{k,i}{\bf{d}}_{k,i}{\bf{d}}_{k,i}^H (a_{k,i}>0)$. However, this means that the user cannot receive the signal due to ${\overline{{\bf{h}}}}_k^H{\overline{{\bf{h}}}}_k{\bf{\Upsilon}}_k={\bf{0}}$, and thus we have ${\rm{rank}}({\bf{V}}_k^\ast)=  L-r+1$. Therefore, there exists only a single subspace that lies in the null space of ${\bf{\Omega}}_k^\ast$, which is denoted as ${\bf{d}}_{k,0}$. Then, we can write ${\bf{V}}_k^\ast=\sum\nolimits_{i=1}^{L-r}a_{k,i}{\bf{d}}_{k,i}{\bf{d}}_{k,i}^H+a_{k,0}{\bf{d}}_{k,0}{\bf{d}}_{k,0}^H$. Finally, we reconstruct the new feasible $\{{\hat{\bf{V}}}_0^\ast,\{{\hat{\bf{V}}}_k^\ast\}, {\hat{\bf{\Lambda}}}^\ast\}$~as
\begin{subequations}\label{rank5}
\begin{align}
\;\;\;\;\;\;\;\;\;\;\;\;\;\;{\hat{\bf{V}}}_0^\ast=&{{\bf{V}}}_0^\ast,\\
\;\;\;\;\;\;\;\;\;\;\;\;\;\;{\hat{\bf{V}}}_k^\ast=&{\bf{V}}_k^\ast-\sum\nolimits_{i=1}^{L-r}a_{k,i}{\bf{d}}_{k,i}{\bf{d}}_{k,i}^H=a_{k,0}{\bf{d}}_{k,0}{\bf{d}}_{k,0}^H,\\
\;\;\;\;\;\;\;\;\;\;\;\;\;\;{\hat{\bf{\Lambda}}}^\ast =&{{\bf{\Lambda}}^\ast}+\sum\nolimits_{i=1}^{L-r}a_{k,i}{\bf{d}}_{k,i}{\bf{d}}_{k,i}^H.
\end{align}
\end{subequations} 

Substituting  $\{{\hat{\bf{V}}}_0^\ast,\{{\hat{\bf{V}}}_k^\ast\}, {\hat{\bf{\Lambda}}}^\ast\}$ into~(\ref{OptD}), we can arrive of the optimal value as the optimal solution, while satisfying all the constraints.
\section{The Rank-One Reconstruction For ${\bf{V}}_0^\ast$ }\label{appendixD}
For $\{{\bf{V}}_k^\ast\}(k\in\{{\mathcal{K}}\})$, the optimal BF $\{{\bf{v}}_k^\ast\}$ can be directly obtained using the eigenvalue decomposition (EVD) method. Next, we adopt the randomization technique of~\cite{41Rank1} for  obtaining the rank-one ${\bf{V}}_0^\ast$. First, by applying the EVD technique, we decompose $\{{\bf{V}}_0^\ast\}$~as
\begin{eqnarray}
	{\bf{V}}_0^\ast={{\bf{X}}}_0^\ast{{\bf{D}}}_0^\ast({{\bf{X}}}_0^\ast)^H.
\end{eqnarray}
We form the $i$th candidate BF vector ${{\bf{v}}}_0^{i\ast}={{\bf{X}}}_0^\ast{{\bf{D}}}_0^{\ast{1}/{2}}{{\bf{s}}}_i$, where ${{\bf{s}}}_i\sim {\mathcal{CN}}({\bf{0}},{\bf{I}})$. As a result, we have $\mathbb{E}\{{{\bf{v}}}_0^{i\ast}({{\bf{v}}}_0^{i\ast})^H\}={\bf{V}}_0^\ast$. For the $i$th candidate BF, we reformulate the following optimization problem
 \begin{subequations}\label{OptJ}
 	\begin{align}
 		&\underset{\left\{a_0^i,\{a_k^i\},{\bf{\Lambda}}\right\}}{\rm{max}}\;\sum_{k=1}^{K}\left[R_{k,i}^{{\rm{AC}}}-\underset{z\in{\mathcal{Z}}}{\max}\left\{R_{k,z,i}^{\rm{EV}}\right\}\right]^+ \label{OptJ0}\\
 	{\rm{s.t.}}&\;\; \sum_{k=1}^{K}R_{k,i}^{\rm{AC}}\leq \underset{l\in{\mathcal{L}}}{\min}\;\left\{R_{l,i}^{\rm{FH}}\right\},\label{OptJ1}\\
 	&\;\;\sum_{k=1}^{K}a_k^{i\ast}||{\bf{F}}{\bf{v}}_k^i||^2+{\rm{Tr}}({\bf{F}}^H{\bf{F}}{\bf{\Lambda}})\leq P_{\rm{max}}^{\rm{BS}},\label{OptJ2}\\
 	&\;\;a_0^i||{\bf{v}}_0^{i\ast}||^2\leq P_{\rm{max}}^{\rm{AC}},\label{OptJ3}
 	\end{align}
 \end{subequations} 
where 
\begin{eqnarray}
\begin{aligned}
R_{l,i}^{\rm{FH}}&=W_{\rm{mc}}\log\left(1+\frac{a_0^i|{\bf{g}}_l{\bf{v}}_0^{i\ast}|^2}{W_{\rm{mc}}N_0}\right),\\
R_{k,i}^{\rm{AC}}\!&=\!W_{\rm{mm}}\log\left(1\!+\!\frac{a_k^i|{\bf{h}}_k{\bf{F}}{\bf{v}}_k^{i\ast}|^2}{\sum\nolimits_{m\neq k}^Ka_m^i|{\bf{h}}_k{\bf{F}}{\bf{v}}_m^{i\ast}|^2\!+\!{\bf{h}}_k{\bf{F}}{\bf{\Lambda}}({\bf{h}}_k{\bf{F}})^H\!+\!W_{\rm{mm}}N_0}\right),\\
R_{k,z,i}^{\rm{EV}}\!&=\!W_{\rm{mm}}\log\left(1\!+\!\frac{a_k^i|{\bf{h}}_z^{\rm{e}}{\bf{F}}{\bf{v}}_k^{i\ast}|^2}{\sum\nolimits_{m\neq k}^Ka_m^i|{\bf{h}}_z^{\rm{e}}{\bf{F}}{\bf{v}}_m^{i\ast}|^2\!+\!{\bf{h}}_z^{\rm{e}}{\bf{F}}{\bf{\Lambda}}({\bf{h}}_z^{\rm{e}}{\bf{F}})^H\!+\!W_{\rm{mm}}N_0}\right).\nonumber
\end{aligned}
\end{eqnarray}

The above problem can be solved by using the method proposed in~Section III.~B, hence we omit the process due to the limited space. Finally, we select the optimal $\{a_k^{i\ast}\}$ from all the candidate BF vectors that owns the maximum secure rate, and the optimal BF vector can be expressed as~${\bf{v}}_k^{\rm{*}}=\sqrt{a_k^{i\ast}}{\bf{v}}_k^{{i\ast}}(k\in\{0,{\mathcal{K}}\})$.

\end{document}